\newtheorem{theorem}{Theorem}[section]
\newtheorem{proposition}[theorem]{Proposition}
\newtheorem{corollary}[theorem]{Corollary}
\newtheorem*{rem}{Remark}
\newtheorem{remark}{Remark}
\numberwithin{equation}{section}
\newcommand{\be}{\begin{equation}}
\newcommand{\ee}{\end{equation}}
\newcommand{\bea}{\begin{eqnarray}}
\newcommand{\eea}{\end{eqnarray}}
\newcommand{\bean}{\begin{eqnarray*}}
\newcommand{\eean}{\end{eqnarray*}}
\newcommand{\bg}{\begin{gather}}
\newcommand{\eg}{\end{gather}}
\newcommand{\bgn}{\begin{gather*}}
\newcommand{\egn}{\end{gather*}}
\title{Painlev\'{e} IV, $\sigma-$Form and the Deformed Hermite Unitary Ensembles}
\author[a,b]{{Mengkun Zhu}\footnote{Zhu\_mengkun@163.com}}
\author[b]{{Dan Wang}\footnote{Corresponding author: bohewan@126.com}}
\author[b]{{Yang Chen}\footnote{yangbrookchen@yahoo.co.uk}}
\affil[a]{\small School of Mathematics and Statistics, Qilu University of Technology (Shandong Academy of Sciences),
Jinan 250353, China}
\affil[b]{Department of Mathematics, Faculty of Science and Technology, University of Macau,
Avenida da Universidade, Taipa, Macau, China}
\date{}
\begin{document}
\maketitle

\abstract{\noindent We study the Hankel determinant generated by a deformed Hermite weight with one jump $w(z,t,\gamma)=e^{-z^2+tz}|z-t|^{\gamma}(A+B\theta(z-t))$, where $A\geq 0$, $A+B\geq 0$, $t\in\textbf{R}$, $\gamma>-1$ and $z\in\textbf{R}$. By using the ladder operators for the corresponding monic orthogonal polynomials, and their relative compatibility conditions, we obtain a series of difference and differential equations to describe the relations among $\alpha_n$, $\beta_n$, $R_n(t)$ and $r_n(t)$. Especially, we find that the auxiliary quantities $R_n(t)$ and $r_n(t)$ satisfy the coupled Riccati equations, and $R_n(t)$ satisfies a particular Painlev\'{e} IV equation. Based on above results, we show that $\sigma_n(t)$ and $\hat{\sigma}_n(t)$, two quantities related to the Hankel determinant and $R_n(t)$, satisfy the continuous and discrete $\sigma-$form equations, respectively. In the end, we also discuss the large $n$ asymptotic behavior of $R_n(t)$, which produce the expansion of the logarithmic of the Hankel determinant and the asymptotic of the second order differential equation of the monic orthogonal polynomials.
}

\vspace{0.1cm}
\noindent
Keywords: Ladder operators, Hermite unitary ensembles, $\sigma-$form, Painlev\'{e} IV.  \\
MSC: 15B52, 42C05, 33E17

\section{Introduction}
\noindent The joint probability density function of the eigenvalues $\{z_i\}_{i=1}^{n}$,
\begin{equation*}
p(z_1, z_2,\cdots,z_n)=\frac{1}{n!D_n(w_0)}\prod\limits_{1\leq i<k\leq n}(z_k-z_i)^2\prod\limits_{j=1}^nw_0(z_j),
\end{equation*}
is a well known fact in the theory of random matrix ensembles \cite{CHM2013,CL1998,CZ2010,M2006},
where $w_0(z)$ is a weight function on the interval $(a,b)$ and has the finite moments, i. e.
\begin{equation*}
\mu_k:=\int_a^bz^kw_0(z)dz, ~~~~k\in\{0,1,2,\cdots\}.
\end{equation*}
Here $D_n(w_0)$ is the normalization constant
\begin{align}\label{1.2}
D_n(w_0)=\frac{1}{n!}\int_{[a,b]^n}\prod\limits_{1\leq i<k\leq n}(z_k-z_i)^2\prod\limits_{j=1}^nw_0(z_j)dz_j,
\end{align}
so that
\begin{equation*}
\int_{[a,b]^n}p(z_1,z_2,\cdots,z_n)dz_1dz_2\cdots dz_n=1.
\end{equation*}

In this paper, we consider
\begin{align}\label{1.1}
w(z,t,\gamma)=e^{-z^2+tz}|z-t|^{\gamma}(A+B\theta(z-t)),
\end{align}
where $A\geq 0$, $A+B\geq 0$, $t\in\textbf{R}$, $\gamma>-1$ and $z\in\textbf{R}$,
which
\begin{equation*}
w_0(z)=e^{-z^2+tz}, ~~t\in\textbf{R}, ~~z\in\textbf{R},
\end{equation*}
corresponds to the deformed Hermite (or Gaussian) unitary ensemble.
Here $\theta(x)$ is the Heaviside function, i.e.
$$ \theta(x)=\left\{
\begin{aligned}
1&,~~ x>0 \\
0&,~~ x\leq 0
\end{aligned}
\right..
$$
According to the general theory of orthogonal polynomials of one variable, the normalization constant (\ref{1.2}) has the two more alternative representation
\begin{align*}
D_n(w)=\det(\mu_{i+j}(t))_{i,j=0}^{n-1}=\det\left(\int_{\textbf{R}}z^{i+j}w(z,t)dz\right)_{i,j=0}^{n-1}=\prod\limits_{k=0}^{n-1}h_k(t),
\end{align*}
where the determinant of the moment matrix $\mu_{i+j}(t)$ is the Hankel determinant, and $\{h_k(t)\}_{k=0}^n$ is the square of the $L^2$ norm of the sequence of polynomials $\{P_k(z)\}_{k=0}^n$.

Min and Chen \cite{MC2019} have studied the Painlev\'{e} transcendents and the Hankel determinants generated by a discontinuous Gaussian weight
\begin{equation}\label{1.10}
w(x,t_1, t_2)=e^{-x^2}(A+B_1\theta(x-t_1)+B_2\theta(x-t_2)),
\end{equation}
where $A_1$, $B_1$ and $B_2$ are constants, $A\geq 0$, $A+B_1\geq 0$, $A+B_1+B_2\geq 0$, $t_1< t_2$, $x\in\textbf{R}$. They considered the Gaussian weight with a single jump ($B_1$ or $B_2$ $=0$) or two jumps ($B_1\neq 0$ and $B_2\neq 0$), and proved their auxiliary quantities satisfy the second order difference and differential equations, respectively, via the ladder operators and supplementary conditions.

\indent
Let $P_n(z,t,\gamma)$ be the monic polynomials of degree $n$ orthogonal with respect to (\ref{1.1}),
\begin{equation}\label{1.3}
\int_{\textbf{R}}P_j(z,t,\gamma)P_i(z,t,\gamma)w(z,t,\gamma)dz=h_j(t,\gamma)\delta_{ji},
\end{equation}
where $j,i\in\{0,1,2,\cdots\}$, and $\delta_{ji}$ denotes the Kronecker delta.
It follows from the orthogonality relations that
\begin{equation}\label{1.4}
zP_n(z,t,\gamma)=P_{n+1}(z,t,\gamma)+\alpha_nP_n(z,t,\gamma)+\beta_nP_{n-1}(z,t,\gamma),~~n\geq 0,
\end{equation}
together with the initial conditions $$P_0(z,t,\gamma):=1,~~~ \beta_0P_{-1}(z,t,\gamma):=0.$$
The monic polynomials $P_n(z,t)$ have the monomial expansion
\begin{align}\label{1.5}
P_n(z,t,\gamma)=z^n+\textbf{p}(n,t,\gamma)z^{n-1}+\cdots+P_n(0,t,\gamma),
\end{align}
and we will see that the coefficient of $z^{n-1}$, $\textbf{p}(n,t,\gamma)$, play a significant role in the following discussions.
Note that due to the $t$ dependence of the weight, the coefficients of the polynomials and the recurrence coefficients $\alpha_n$ and $\beta_n$ also depend on $t$, the position of the jump.

From (\ref{1.4}) and (\ref{1.5}), for $n\in\{0,1,2,\cdots\}$,
\begin{equation}\label{1.6}
\alpha_n=\textbf{p}(n,t,\gamma)-\textbf{p}(n+1,t,\gamma),
\end{equation}
and
\begin{equation}\label{1.7}
\beta_n=\frac{h_n(t,\gamma)}{h_{n-1}(t,\gamma)}.
\end{equation}
A telescopic sum of (\ref{1.6}), it yields
\begin{equation}\label{1.8}
\sum\limits_{j=0}^{n-1}\alpha_j=-\textbf{p}(n,t,\gamma).
\end{equation}

The paper is organized up as follow. In section 2, we derive the ladder operators and the compatibility conditions ($S_1$), ($S_2$) and ($S_2^{\prime}$) with respect to the weight (\ref{1.1}). In section 3, we obtain some important identities on the auxiliary quantities $R_n(t)$ and $r_n(t)$ via applying the ladder operators. In section 4, we are interested in the parameter of the weight (\ref{1.1}), and we show that $r_n(t)$ and $R_n(t)$ satisfy the coupled Riccati equations.
 Section 5 is devoted to discuss two quantities $\sigma_n(t)$ and $\hat{\sigma}_n(t)$, which are allied to the Hankel determinant and $R_n(t)$. We prove that they satisfy the continuous or discrete $\sigma-$form equations, respectively. The final section, we make use of the equations in section $2-5$ to show that
 $R_n(t)$ satisfy a particular Painlev\'{e} IV equation. Based on this equation, we obtain the large $n$ asymptotic of the auxiliary quantity $R_n(t)$. Moreover, the logarithmic of the Hankel determinant $D_n(t)$ and the asymptotic of the second order differential equation of $P_n(z,t,\gamma)$ are also obtained.

\section{Ladder operators and compatibility conditions}
\noindent
A detailed proof of the lowering ladder operator, which is included in the ladder operators \cite{CP2005}, is shown in the following theorem.
For convenience,  we denote
\begin{align*}
w(z,t,\gamma)&=e^{-z^2+tz}|z-t|^{\gamma}(A+B\theta(z-t)),\\
w_0(z,t)&=e^{-z^2+tz},\\
w_{J}(z,t)&=A+B\theta(z-t),
\end{align*}
where $A\geq 0$, $A+B\geq 0$, $t\in\textbf{R}$, $\gamma>-1$ and $z\in\textbf{R}$. Meanwhile,
we do not always write down the $t$ or $\gamma$ dependence in $P_n(z,t,\gamma)$, $w(z,t,\gamma)$, $w_0(z,t)$, $w_{J}(z,t)$, $h_n(t,\gamma)$, $\alpha_n(t,\gamma)$ and $\beta_n(t,\gamma)$ unless it is needed.

\begin{theorem}\label{T1}
The monic orthogonal polynomials with respect to the weight (\ref{1.1}) satisfy the following differentiation formula:
\begin{equation}\label{2.1}
P_n^{\prime}(z)=\beta_nA_n(z)P_{n-1}(z)-B_n(z)P_n(z),
\end{equation}
where
\begin{align}\label{2.2}
A_n(z)=\frac{1}{h_n}\int_{\textbf{R}}P_n^2(y)\frac{v_0^{\prime}(z)-v_0^{\prime}(y)}{z-y}w(y)dy+a_n(t),
\end{align}
\begin{equation}\label{2.3}
B_n(z)=\frac{1}{h_{n-1}}\int_{\textbf{R}}P_n(y)P_{n-1}(y)\frac{v_0^{\prime}(z)-v_0^{\prime}(y)}{z-y}w(y)dy+b_n(t),
\end{equation}
\begin{equation}\label{2.4}
a_n(t)=\frac{\gamma}{h_n}\int_{\textbf{R}}\frac{P_n^2(y)}{(z-y)(y-t)}w(y)dy
\end{equation}
and
\begin{equation}\label{2.5}
b_n(t)=\frac{\gamma}{h_{n-1}}\int_{\textbf{R}}\frac{P_n(y)P_{n-1}(y)}{(z-y)(y-t)}w(y)dy.
\end{equation}
Here
\begin{equation*}
v_0(z)=-\ln w_0(z).
\end{equation*}
\end{theorem}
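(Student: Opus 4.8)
The plan is to start from the observation that $P_n'(z)$ is a polynomial of degree $n-1$, so it expands in the orthogonal basis as $P_n'(z)=\sum_{k=0}^{n-1}c_{n,k}P_k(z)$ with
\[
c_{n,k}=\frac{1}{h_k}\int_{\textbf{R}}P_n'(y)P_k(y)w(y)\,dy .
\]
First I would integrate by parts, discarding the boundary contributions (justified by the Gaussian decay of $w$ at $\pm\infty$) and using orthogonality to annihilate the term $\int_{\textbf{R}}P_n P_k' w\,dy$, since $\deg P_k'\le n-2<n$. This collapses the coefficient to $c_{n,k}=-\frac{1}{h_k}\int_{\textbf{R}}P_n(y)P_k(y)\,w'(y)\,dy$, so that everything is driven by the logarithmic derivative $w'/w$.

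The core computation is to express $w'/w$. Writing $w=w_0\,|y-t|^{\gamma}\,w_J$ and differentiating away from the jump gives $w'(y)/w(y)=-v_0'(y)+\frac{\gamma}{y-t}$, with $v_0'(y)=2y-t$. Substituting this, splitting $c_{n,k}$ into its Gaussian piece and its $\gamma/(y-t)$ piece, and inserting back into $P_n'(z)=\sum_k c_{n,k}P_k(z)$, I would interchange summation and integration and collapse $\sum_{k=0}^{n-1}P_k(z)P_k(y)/h_k$ by the Christoffel--Darboux formula. This converts the sum into an integral against the kernel $\frac{P_n(z)P_{n-1}(y)-P_{n-1}(z)P_n(y)}{h_{n-1}(z-y)}$, yielding one coefficient multiplying $P_n(z)$ and one multiplying $P_{n-1}(z)$.

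To identify these coefficients with $-B_n(z)$ and $\beta_n A_n(z)$, I would write $v_0'(y)=v_0'(z)-\bigl(v_0'(z)-v_0'(y)\bigr)$. The part carrying $v_0'(z)-v_0'(y)$ reproduces exactly the kernel $\frac{v_0'(z)-v_0'(y)}{z-y}$ of (\ref{2.2})--(\ref{2.3}) (a polynomial in $y$, so these integrals are unambiguous), while the $\gamma/(y-t)$ piece reproduces the terms $a_n,b_n$ of (\ref{2.4})--(\ref{2.5}); the relation $\beta_n=h_n/h_{n-1}$ then turns the $1/h_{n-1}$ prefactors into the stated $\beta_n$ and $1/h_n$ normalizations. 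The residual contributions proportional to $v_0'(z)/(z-y)$ reassemble, once more by Christoffel--Darboux, into $v_0'(z)\sum_{k=0}^{n-1}\frac{P_k(z)}{h_k}\int_{\textbf{R}}P_n P_k\,w\,dy$, which vanishes identically by orthogonality since $k<n$. This cancellation is what leaves precisely the two-term ladder relation (\ref{2.1}).

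The step I expect to be the main obstacle is the rigorous treatment of the non-smooth point $y=t$. The identity $w'/w=-v_0'+\gamma/(y-t)$ silently omits a possible Dirac mass produced by the Heaviside factor $w_J$, and the integration by parts should really be performed on $(-\infty,t)$ and $(t,\infty)$ separately. I would argue that for $\gamma>0$ the combined factor $|y-t|^{\gamma}w_J(y)$ is continuous and vanishes at $y=t$, so the boundary terms at $t^{\pm}$ cancel and no $\delta$-contribution survives — which is exactly why no jump term appears in $A_n,B_n$ — and then extend the identities to the full range $\gamma>-1$ by analytic continuation, the integrals in (\ref{2.2})--(\ref{2.5}) (understood as Cauchy principal values in $y$) being analytic in $\gamma$ there. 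Verifying the convergence of these principal-value integrals near $y=t$ and near $y=z$, and confirming that the continuation is legitimate, is the delicate point; the remaining algebra is routine.
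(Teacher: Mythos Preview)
Your proposal is correct and follows essentially the same route as the paper: expand $P_n'$ in the orthogonal basis, integrate by parts to bring in $w'/w=-v_0'(y)+\gamma/(y-t)$, and collapse the sum via Christoffel--Darboux. The only cosmetic difference is ordering---the paper inserts the harmless $v_0'(z)$ term \emph{before} applying Christoffel--Darboux (using $\int P_nP_kw=0$ directly in $C_{n,k}$), which spares you the ``reassemble and cancel'' step at the end; your more careful discussion of the non-smooth point $y=t$ is in fact more thorough than the paper's, which simply records the distributional derivative of $|z-t|^{\gamma}$ and proceeds.
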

\begin{proof}
Firstly, we note \cite{CF2006}
\begin{equation}\label{2.5.1}
\frac{\partial(|z-t|^{\gamma})}{\partial z}=\bigg((z-t)^{\gamma}-(t-z)^{\gamma}\bigg)\delta(z-t)+\frac{\gamma|z-t|^{\gamma}}{z-t},
\end{equation}
which is obtained by writing
\begin{equation*}
|z-t|^{\gamma}=(z-t)^{\gamma}\theta(z-t)+(t-z)^{\gamma}\theta(t-z).
\end{equation*}
From
\begin{equation*}
P_n^{\prime}(z)=\sum\limits_{k=0}^{n-1}C_{n,k}P_k(z),
\end{equation*}
where $C_{n,k}$ is determined from the orthogonality relations
\begin{align*}
C_{n,k}&=\frac{1}{h_k}\int_{\textbf{R}}P_n^{\prime}(y)P_k(y)\omega(y)dy.
\end{align*}

Using the integration by parts and noting the weight (\ref{1.1}) vanishes at the endpoints of the orthogonality interval, we find
\begin{align*}
C_{n,k}&=\frac{1}{h_k}\int_{\textbf{R}}P_n^{\prime}(y)P_k(y)w(y)dy=-\frac{1}{h_k}\int_{\textbf{R}}P_n(y)P_k(y)w^{\prime}(y)dy\\
&=-\frac{1}{h_k}\int_{\textbf{R}}P_n(y)P_k(y)(v_0^{\prime}(z)-v_0^{\prime}(y))w(y)dy-\frac{\gamma}{h_k}\int_{\textbf{R}}P_n(y)P_k(y)\frac{w(y)}{y-t}dy;
\end{align*}
straightforward,
\begin{align}\label{2.6}
&P_n^{\prime}(z)\nonumber\\
=&\sum\limits_{k=0}^{n-1}\bigg(-\frac{1}{h_k}\int_{\textbf{R}}P_n(y)P_k(y)(v_0^{\prime}(z)-v_0^{\prime}(y))w(y)dy-\frac{\gamma}{h_k}\int_{\textbf{R}}P_n(y)P_k(y)\frac{w(y)}{y-t}dy\bigg)P_k(z)\nonumber\\
=&-\int_{\textbf{R}}P_n(y)\sum\limits_{k=0}^{n-1}\frac{P_k(y)P_k(z)}{h_k}(v_0^{\prime}(z)-v_0^{\prime}(y))w(y)dy-\gamma\int_{\textbf{R}}P_n(y)\sum\limits_{k=0}^{n-1}\frac{P_k(y)P_k(z)}{h_k}\frac{w(y)}{y-t}dy\nonumber\\
=&-\frac{1}{h_{n-1}}\int_{\textbf{R}}\frac{v_0^{\prime}(z)-v_0^{\prime}(y)}{z-y}P_n(y)P_{n-1}(y)w(y)dyP_n(z)\nonumber\\
&+\frac{1}{h_{n-1}}\int_{\textbf{R}}\frac{v_0^{\prime}(z)-v_0^{\prime}(y)}{z-y}P_n^2(y)w(y)dyP_{n-1}(z)\nonumber\\
&-\frac{\gamma}{h_{n-1}}\int_{\textbf{R}}P_n(y)P_{n-1}(y)\frac{w(y)}{(z-y)(y-t)}dyP_n(z)\nonumber\\
&+\frac{\gamma}{h_{n-1}}\int_{\textbf{R}}P_n^2(y)\frac{w(y)}{(z-y)(y-t)}dyP_{n-1}(z),
\end{align}
where we have used the Christoffel-Darboux formula\cite{S1939}. Applying (\ref{1.7}), we arrive the desired result.
\end{proof}
\begin{corollary}
Two identities
\begin{equation}\label{2.6.1}
\gamma\int_{\textbf{R}}\frac{P_n^2(y)w(y)}{y-t}dy=\int_{\textbf{R}}P_n^2(y)v_0^{\prime}(y)w(y)dy
\end{equation}
and
\begin{equation}\label{2.6.2}
\frac{\gamma}{h_{n-1}}\int_{\textbf{R}}P_n(y)P_{n-1}(y)\frac{w(y)}{y-t}dy=\frac{1}{h_{n-1}}\int_{\textbf{R}}P_n(y)P_{n-1}(y)v_0^{\prime}(y)w(y)dy-n
\end{equation}
\end{corollary}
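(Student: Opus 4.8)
The plan is to obtain both identities from a single vanishing boundary integral together with the product rule for the weight. Because the Gaussian factor $e^{-y^2+ty}$ forces $w$ and every integrand below to decay at $y=\pm\infty$, for any polynomial product $P_n(y)P_k(y)$ we have
\begin{equation*}
\int_{\textbf{R}}\frac{d}{dy}\big[P_n(y)P_k(y)w(y)\big]\,dy=0.
\end{equation*}
Expanding the derivative splits the left-hand side into three pieces, $\int_{\textbf{R}}P_n'P_kw\,dy$, $\int_{\textbf{R}}P_nP_k'w\,dy$ and $\int_{\textbf{R}}P_nP_kw'\,dy$. The first two are handled purely by the orthogonality relation (\ref{1.3}), while the third is the one that produces $v_0'$ and the singular factor $\tfrac{1}{y-t}$.

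The central computation is $w'(y)$. Writing $w=w_0\,|y-t|^{\gamma}\,w_J$ and combining (\ref{2.5.1}) with $\theta'(y-t)=\delta(y-t)$ and $w_0'(y)=-v_0'(y)w_0(y)$, one gets
\begin{equation*}
w'(y)=\Big(-v_0'(y)+\frac{\gamma}{y-t}\Big)w(y)+\big(\text{terms proportional to }\delta(y-t)\big).
\end{equation*}
I would then argue that the $\delta(y-t)$ contributions drop out: the jump coefficient $\big((y-t)^{\gamma}-(t-y)^{\gamma}\big)$ coming from (\ref{2.5.1}) and the factor $|y-t|^{\gamma}$ multiplying the jump $B$ of $w_J$ both vanish at $y=t$ for $\gamma>0$, so they integrate to zero against the remaining smooth factors. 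Hence $\int_{\textbf{R}}P_nP_kw'\,dy=\int_{\textbf{R}}P_nP_k\big(-v_0'+\tfrac{\gamma}{y-t}\big)w\,dy$.

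For (\ref{2.6.1}) I specialize to $k=n$. The two polynomial-derivative pieces then coincide and equal $\int_{\textbf{R}}P_nP_n'w\,dy$, which vanishes since $\deg P_n'=n-1$ and $P_n$ is orthogonal to every polynomial of lower degree. This leaves $\int_{\textbf{R}}P_n^2w'\,dy=0$, i.e. $\gamma\int_{\textbf{R}}\tfrac{P_n^2w}{y-t}\,dy=\int_{\textbf{R}}P_n^2v_0'w\,dy$, which is (\ref{2.6.1}). For (\ref{2.6.2}) I take $k=n-1$. Now $\int_{\textbf{R}}P_nP_{n-1}'w\,dy=0$ because $\deg P_{n-1}'=n-2<n$, whereas $\int_{\textbf{R}}P_n'P_{n-1}w\,dy=n\,h_{n-1}$, since $P_n'=nP_{n-1}+(\text{degree}\le n-2)$ and orthogonality isolates the leading term. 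Thus $n\,h_{n-1}+\int_{\textbf{R}}P_nP_{n-1}w'\,dy=0$; substituting $w'$ and dividing by $h_{n-1}$ gives (\ref{2.6.2}), the term $-n$ arising precisely from $\int_{\textbf{R}}P_n'P_{n-1}w\,dy=n\,h_{n-1}$.

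The step I expect to be the main obstacle is the rigorous justification of the two simplifications in the second paragraph when $-1<\gamma\le 0$. In that range $|y-t|^{\gamma}$ is singular at $y=t$, the integrals $\int_{\textbf{R}}\tfrac{P_nP_kw}{y-t}\,dy$ are only conditionally convergent and must be read as principal values (or Hadamard finite parts), and the vanishing of the $\delta(y-t)$ terms can no longer be read off by naive evaluation at $y=t$. I would handle this either by establishing the identities first for $\gamma>0$, where every integral is an honest Lebesgue integral and all boundary and delta contributions vanish outright, and then extending to $\gamma>-1$ by analyticity in $\gamma$, or by an $\varepsilon$-regularization excising a neighbourhood $|y-t|<\varepsilon$ and checking that the excised boundary contributions cancel in the limit $\varepsilon\to 0$.
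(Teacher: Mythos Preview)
Your argument is correct and is essentially the same as the paper's: both derive the identities from the vanishing (respectively, the value $n h_{n-1}$) of $\int_{\mathbf R}P_n'(y)P_k(y)w(y)\,dy$ together with integration by parts, using $w'(y)=\bigl(-v_0'(y)+\tfrac{\gamma}{y-t}\bigr)w(y)$ up to $\delta$-contributions that are discarded. Your framing via $\int_{\mathbf R}\frac{d}{dy}\bigl[P_nP_kw\bigr]\,dy=0$ is just a repackaging of the paper's integration by parts, and your discussion of the $\delta$-terms and the range $-1<\gamma\le 0$ is in fact more careful than the paper, which simply suppresses these points.
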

\begin{proof}
From the proof of theorem \ref{T1}, we see that
\begin{equation*}
\frac{1}{h_n}\int_{\textbf{R}}P_n^{\prime}(y)P_n(y)w(y)dy=0.
\end{equation*}
Applying integration by parts into the left side of above equation,
\begin{align*}
\frac{1}{h_n}\int_{\textbf{R}}P_n^{\prime}(y)P_n(y)w(y)dy=\frac{1}{h_n}\int_{\textbf{R}}P_n^2(y)v_0^{\prime}(y)w(y)dy-\frac{\gamma}{h_n}\int_{\textbf{R}}P_n^2(y)\frac{w(y)}{y-t}dy,
\end{align*}
which gives (\ref{2.6.1}).

The same steps to do below equation,
\begin{equation*}
\frac{1}{h_{n-1}}\int_{\textbf{R}}P_n^{\prime}(y)P_{n-1}(y)w(y)dy=n,
\end{equation*}
we obtain (\ref{2.6.2}).
\end{proof}

\begin{remark}
The equation (\ref{2.1}) is called a lowering operator, see\cite{CI1997,ZC2019,CI2015,FAZ2012,WZCMMA,ZBCZ,MC2019}. With the help of (\ref{2.1}) and (\ref{2.7}), $P_n(z)$ also satisfies the raising operator equation
\begin{equation*}
P_{n-1}^{\prime}(z)=-A_{n-1}(z)P_n(z)+(B_n(z)+v_0^{\prime}(z))P_{n-1}(z).
\end{equation*}
\end{remark}

Note that $A_n(z)$ and $B_n(z)$ are not independent, a direct calculation produces two fundamental supplementary conditions\cite{CI1997} valid for all $z$.
\begin{theorem}\label{T2}
The functions $A_n(z)$ and $B_n(z)$ satisfy the conditions:
\begin{equation}\label{2.7}
B_n(z)+B_{n+1}(z)=(z-\alpha_n)A_n(z)-v_0^{\prime}(z),
\end{equation}
\begin{equation}\label{2.8}
1+(z-\alpha_n)(B_{n+1}(z)-B_n(z))=\beta_{n+1}A_{n+1}(z)-\beta_nA_{n-1}(z).
\end{equation}
\end{theorem}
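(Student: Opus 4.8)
The two identities are the classical Chen--Ismail supplementary conditions, so my plan is to obtain them as purely algebraic consequences of the lowering operator \eqref{2.1}, the three-term recurrence \eqref{1.4}, the orthogonality relations, and the Corollary identity \eqref{2.6.1}. I would first record one simplification: since $v_0^{\prime}(z)=2z-t$ is linear, the divided difference $\frac{v_0^{\prime}(z)-v_0^{\prime}(y)}{z-y}$ collapses to the constant $2$, so that $A_n(z)=2+a_n(z)$ and $B_n(z)=b_n(z)$ (the $v_0^{\prime}$-part of $B_n$ vanishes by orthogonality of $P_n,P_{n-1}$). I will nonetheless keep $A_n,B_n$ as whole functions where convenient, so that the $\gamma$-corrections $a_n,b_n$ are carried automatically.

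For \eqref{2.7} I would argue directly on the integrals defining $B_n$ and $B_{n+1}$. Using $1/h_{n-1}=\beta_n/h_n$ I place both over $h_n$; then the divided-difference integrals and the $\gamma$-correction integrals share the structure $\int(\cdots)P_nP_{n\mp1}w\,dy$, and the recurrence $P_{n+1}(y)+\beta_nP_{n-1}(y)=(y-\alpha_n)P_n(y)$ turns every numerator into $(y-\alpha_n)P_n^2(y)$. Splitting $y-\alpha_n=(y-z)+(z-\alpha_n)$ then does the work: the $(z-\alpha_n)$ piece of the divided-difference part gives $(z-\alpha_n)(A_n-a_n)$ while that of the $\gamma$-part gives $(z-\alpha_n)a_n$, so together they rebuild $(z-\alpha_n)A_n(z)$; the $(y-z)$ piece cancels the denominator $z-y$ and leaves $-v_0^{\prime}(z)$ together with the two leftover moments $\frac{1}{h_n}\int_{\mathbf R}v_0^{\prime}(y)P_n^2w\,dy$ and $-\frac{\gamma}{h_n}\int_{\mathbf R}\frac{P_n^2w}{y-t}\,dy$, which cancel by \eqref{2.6.1}. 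What survives is exactly $(z-\alpha_n)A_n(z)-v_0^{\prime}(z)$.

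For \eqref{2.8} I would differentiate the recurrence \eqref{1.4} to get $P_n(z)+zP_n^{\prime}(z)=P_{n+1}^{\prime}(z)+\alpha_nP_n^{\prime}(z)+\beta_nP_{n-1}^{\prime}(z)$, substitute the lowering operator \eqref{2.1} at the indices $n-1,n,n+1$, and then use the recurrence once more to rewrite $P_{n+1}$ and $P_{n-2}$ in terms of the pair $\{P_n,P_{n-1}\}$. The whole relation collapses to $r(z)P_n(z)+s(z)P_{n-1}(z)=0$, where $r,s$ are assembled from $A_{n\pm1},A_n,B_{n\pm1},B_n$ and the recurrence coefficients. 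Setting the coefficient of $P_n(z)$ to zero and collecting the grouping $(z-\alpha_n)(B_{n+1}-B_n)$ yields \eqref{2.8} at once, with no prior appeal to \eqref{2.7}. The coefficient of $P_{n-1}(z)$ produces the telescopic relation $(z-\alpha_n)A_n-(z-\alpha_{n-1})A_{n-1}=B_{n+1}-B_{n-1}$, which, combined with \eqref{2.7} at level $n-1$, reproduces \eqref{2.7} and so serves as an internal consistency check.

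The step I expect to be the main obstacle is justifying that the coefficients of $P_n(z)$ and $P_{n-1}(z)$ may be read off separately in the last display. Because $A_n(z)$ and $B_n(z)$ are not polynomials but $z$-dependent, Cauchy-transform-type functions carrying a singularity at $z=t$, this separation must be argued rather than assumed; the clean way is to invoke that $P_n$ and $P_{n-1}$ are coprime with interlacing real zeros, so they never vanish simultaneously, which forces each coefficient-function to vanish identically on the relevant domain. The remaining effort is bookkeeping: tracking $a_n,b_n$ through every substitution and, should one reopen \eqref{2.1}, accounting for the $\delta(z-t)$ contribution in \eqref{2.5.1}. Since Theorem \ref{T1} and its Corollary are already established, those singular contributions are absorbed upstream and need not be revisited here.
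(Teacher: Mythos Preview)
Your derivation of \eqref{2.7} is essentially the paper's: combine $B_n+B_{n+1}$ via the recurrence rewritten as $\frac{(y-\alpha_n)P_n}{h_n}=\frac{P_{n+1}}{h_n}+\frac{P_{n-1}}{h_{n-1}}$, split $y-\alpha_n=(y-z)+(z-\alpha_n)$, and kill the residual moments with \eqref{2.6.1}.

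For \eqref{2.8}, however, you take a genuinely different route. The paper does not touch the lowering operator again; it simply writes out both $\beta_{n+1}A_{n+1}-\beta_nA_{n-1}$ and $(z-\alpha_n)(B_{n+1}-B_n)$ straight from the integral definitions \eqref{2.2}--\eqref{2.5}, feeds in the recurrence in the form $\frac{P_{n+1}}{h_n}-\frac{P_{n-1}}{h_{n-1}}$, and then subtracts. The leftover cross-terms are disposed of by the \emph{second} Corollary identity \eqref{2.6.2}, which supplies exactly the ``$-1$'' that becomes the ``$1$'' on the left of \eqref{2.8}. Your approach---differentiate the recurrence, insert \eqref{2.1} at indices $n-1,n,n+1$, and match coefficients in the $\{P_n,P_{n-1}\}$ basis---is the Chen--Ismail algebraic argument; it never needs \eqref{2.6.2} and works verbatim for any weight once the lowering operator is available. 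The trade-off is precisely the obstacle you flagged: you must justify that the coefficient-functions of $P_n$ and $P_{n-1}$ vanish separately, and coprimality alone is not enough since $r,s$ are transcendental. The clean fix is to note from the large-$z$ expansions that $r(z),s(z)$ are $O(1)$ at infinity and analytic for $z$ off the real axis, so the identity $rP_n=-sP_{n-1}$ forces $r\equiv s\equiv 0$ by degree-counting after clearing the Cauchy-transform part. The paper's integral computation sidesteps this entirely, at the cost of one extra identity from the Corollary.
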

\begin{proof}
Using (\ref{1.7}), the recurrence relation $(z-\alpha_n)P_n(z,t,\gamma)=P_{n+1}(z,t,\gamma)+\beta_nP_{n-1}(z,t,\gamma)$ has a restatement, i. e.
\begin{equation}\label{2.9}
\frac{(z-\alpha_n)P_n(z,t,\gamma)}{h_n}=\frac{P_{n+1}(z,t,\gamma)}{h_n}+\frac{P_{n-1}(z,t,\gamma)}{h_{n-1}}.
\end{equation}

From the definitions of $A_n(z)$ and $B_n(z)$, (\ref{2.2}) and (\ref{2.3}), and substituting (\ref{2.9}) into below computation, we have
\begin{align}\label{2.10}
&B_n(z)+B_{n+1}(z)\nonumber\\
=&\int_{\textbf{R}}P_n(y)\bigg(\frac{P_{n+1}(y)}{h_n}+\frac{P_{n-1}(y)}{h_{n-1}}\bigg)\frac{v_0^{\prime}(z)-v_0^{\prime}(y)}{z-y}w(y)dy\nonumber\\
&+\gamma\int_{\textbf{R}}P_n(y)\bigg(\frac{P_{n+1}(y)}{h_n}+\frac{P_{n-1}(y)}{h_{n-1}}\bigg)\frac{w(y)}{(z-y)(y-t)}dy\nonumber\\
&=\frac{1}{h_n}\int_{\textbf{R}}(y-\alpha_n)P_n^2(y)\frac{v_0^{\prime}(z)-v_0^{\prime}(y)}{z-y}w(y)dy+\frac{\gamma}{h_n}\int_{\textbf{R}}\frac{(y-\alpha_n)P_n^2(y)w(y)}{(z-y)(y-t)}dy\nonumber\\
&=-\frac{1}{h_n}\int_{\textbf{R}}P_n^2(y)(v_0^{\prime}(z)-v_0^{\prime}(y))w(y)dy-\frac{\gamma}{h_n}\int_{\textbf{R}}\frac{P_n^2(y)w(y)}{y-t}dy\nonumber\\
&+\frac{1}{h_n}\int_{\textbf{R}}(z-\alpha_n)P_n^2(y)\frac{v_0^{\prime}(z)-v_0^{\prime}(y)}{z-y}w(y)dy+\frac{\gamma}{h_n}\int_{\textbf{R}}\frac{(z-\alpha_n)P_n^2(y)w(y)}{(z-y)(y-t)}dy\nonumber\\
&=(z-\alpha_n)A_n(z)-v_0^{\prime}(z)+\frac{1}{h_n}\int_{\textbf{R}}P_n^2(y)v_0^{\prime}(y)w(y)dy-\frac{\gamma}{h_n}\int_{\textbf{R}}\frac{P_n^2(y)w(y)}{y-t}dy,
\end{align}
where
\begin{equation*}
y-\alpha_n=(y-z)+(z-\alpha_n)
\end{equation*}
and (\ref{2.6.1}) are used.The proof of (\ref{2.7}) is completed.\\
Similarly steps, by using the definition of $A_n(z)$ and (\ref{1.7}), we have
\begin{align}\label{2.16}
&\beta_{n+1}A_{n+1}(z)-\beta_nA_{n-1}(z)\nonumber\\
=&\frac{1}{h_n}\int_{\textbf{R}}\frac{v_0^{\prime}(z)-v_0^{\prime}(y)}{z-y}P_{n+1}^2(y)w(y)dy+\frac{\gamma}{h_n}\int_{\textbf{R}}\frac{P_{n+1}^2(y)}{(z-y)(y-t)}w(y)dy\nonumber\\
&-\frac{\beta_n}{h_{n-1}}\int_{\textbf{R}}\frac{v_0^{\prime}(z)-v_0^{\prime}(y)}{z-y}P_{n-1}^2(y)w(y)dy-\frac{\gamma\beta_n}{h_{n-1}}\int_{\textbf{R}}\frac{P_{n-1}^2(y)w(y)}{(z-y)(y-t)}dy\nonumber\\
=&\int_{\textbf{R}}\frac{v_0^{\prime}(z)-v_0^{\prime}(y)}{z-y}\bigg(\frac{P_{n+1}^2(y)}{h_n}-\frac{\beta_nP_{n-1}^2(y)}{h_{n-1}}\bigg)w(y)dy\nonumber\\
&+\gamma\int_{\textbf{R}}\bigg(\frac{P_{n+1}^2(y)}{h_n}-\frac{\beta_nP_{n-1}^2(y)}{h_{n-1}}\bigg)\frac{w(y)}{(z-y)(y-t)}dy.
\end{align}
On the other hand,
Applying (\ref{2.6.2}) and (\ref{2.9}), we consider
\begin{align}\label{2.18}
&(z-\alpha_n)(B_{n+1}(z)-B_n(z))\nonumber\\
=&(z-\alpha_n)\bigg(\int_{\textbf{R}}\frac{v_0^{\prime}(z)-v_0^{\prime}(y)}{z-y}\bigg(\frac{P_{n+1}(y)}{h_n}-\frac{P_{n-1}(y)}{h_{n-1}}\bigg)P_n(y)w(y)dy\nonumber\\
&+\gamma\int_0^{\infty}\frac{P_n(y)}{(z-y)(y-t)}\bigg(\frac{P_{n+1}(y)}{h_n}-\frac{P_{n-1}(y)}{h_{n-1}}\bigg)w(y)dy\nonumber\\
=&\int_{\textbf{R}}(v_0^{\prime}(z)-v_0^{\prime}(y))\bigg(\frac{P_{n+1}(y)}{h_n}-\frac{P_{n-1}(y)}{h_{n-1}}\bigg)P_n(y)w(y)dy\nonumber\\
&+\int_{\textbf{R}}\frac{v_0^{\prime}(z)-v_0^{\prime}(y)}{z-y}\bigg(\frac{P_{n+1}^2(y)}{h_n^2}-\frac{P_{n-1}^2(y)}{h_{n-1}^2}\bigg)h_nw(y)dy\nonumber\\
&+\gamma\int_{\textbf{R}}\frac{P_n(y)}{y-t}\bigg(\frac{P_{n+1}(y)}{h_n}-\frac{P_{n-1}(y)}{h_{n-1}}\bigg)w(y)dy\nonumber\\
&+\gamma\int_{\textbf{R}}\frac{w(y)}{(z-y)(y-t)}\bigg(\frac{P_{n+1}^2(y)}{h_n^2}-\frac{P_{n-1}^2(y)}{h_{n-1}^2}\bigg)h_ndy\nonumber\\
=&-\bigg(\frac{1}{h_n}\int_{\textbf{R}}P_{n+1}(y)P_n(y)v_0^{\prime}(y)w(y)dy-\frac{\gamma}{h_{n}}\int_{\textbf{R}}\frac{P_n(y)P_{n+1}(y)w(y)}{y-t}dy\bigg)\nonumber\\
&+\bigg(\frac{1}{h_{n-1}}\int_{\textbf{R}}P_{n}(y)P_{n-1}(y)v_0^{\prime}(y)w(y)dy-\frac{\gamma}{h_{n-1}}\int_{\textbf{R}}\frac{P_{n-1}(y)P_n(y)w(y)}{y-t}dy\bigg)\nonumber\\
&+\int_{\textbf{R}}\frac{v_0^{\prime}(z)-v_0^{\prime}(y)}{z-y}\bigg(\frac{P_{n+1}^2(y)}{h_n}-\frac{\beta_nP_{n-1}^2(y)}{h_{n-1}}\bigg)w(y)dy\nonumber\\
&+\gamma\int_{\textbf{R}}\frac{w(y)}{(z-y)(y-t)}\bigg(\frac{P_{n+1}^2(y)}{h_n}-\frac{\beta_nP_{n-1}^2(y)}{h_{n-1}}\bigg)dy\nonumber\\
=&\int_{\textbf{R}}\frac{v_0^{\prime}(z)-v_0^{\prime}(y)}{z-y}\bigg(\frac{P_{n+1}^2(y)}{h_n}-\frac{\beta_nP_{n-1}^2(y)}{h_{n-1}}\bigg)w(y)dy\nonumber\\
&+\gamma\int_{\textbf{R}}\frac{w(y)}{(z-y)(y-t)}\bigg(\frac{P_{n+1}^2(y)}{h_n}-\frac{\beta_nP_{n-1}^2(y)}{h_{n-1}}\bigg)dy-1.
\end{align}
(\ref{2.16}) minus (\ref{2.18}) follows
\begin{equation*}
\beta_{n+1}A_{n+1}(z)-\beta_nA_{n-1}(z)-(z-\alpha_n)(B_{n+1}(z)-B_n(z))=1,
\end{equation*}
and we arrive (\ref{2.8}).
\end{proof}
\begin{remark}
(\ref{2.7}) and (\ref{2.8}) are the well-known supplementary conditions ($S_1$) and ($S_2$), respectively. ($S_1$) and ($S_2$) have been applied to random matrix theory in \cite{TW1994}.
\end{remark}
We produces an identity involving $\sum\limits_{k=0}^{n-1} A_k(z)$ by the combination of ($S_1$) and ($S_2$).
\begin{theorem}
$A_n(z)$, $B_n(z)$ and $\sum\limits_{k=0}^{n-1} A_k(z)$ satisfy the identity
\begin{equation}
B_n^2(z)+v_0^{\prime}(z)B_n(z)+\sum\limits_{k=0}^{n-1} A_k(z)=\beta_nA_n(z)A_{n-1}(z).\tag{$S_2^{\prime}$}
\end{equation}
\end{theorem}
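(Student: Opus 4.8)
The plan is to derive $(S_2^{\prime})$ purely algebraically from the two supplementary conditions $(S_1)$ (equation (\ref{2.7})) and $(S_2)$ (equation (\ref{2.8})), by producing a telescoping relation in $n$ and summing it. The guiding idea is that the sum $\sum_{k=0}^{n-1}A_k(z)$ appearing in $(S_2^{\prime})$ strongly suggests that $(S_2^{\prime})$ is the accumulated form of a first-order (in $n$) recurrence, and the natural way to generate such a recurrence is to multiply $(S_2)$ by $A_n(z)$ so that its right-hand side becomes a telescoping difference of the products $\beta_{n+1}A_{n+1}A_n$ and $\beta_n A_n A_{n-1}$.

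First I would multiply $(S_2)$ through by $A_n(z)$, giving
\begin{equation*}
A_n(z)+(z-\alpha_n)A_n(z)\bigl(B_{n+1}(z)-B_n(z)\bigr)=\beta_{n+1}A_{n+1}(z)A_n(z)-\beta_n A_n(z)A_{n-1}(z).
\end{equation*}
The crucial substitution is then to replace $(z-\alpha_n)A_n(z)$ using $(S_1)$, namely $(z-\alpha_n)A_n(z)=B_n(z)+B_{n+1}(z)+v_0^{\prime}(z)$. The product $\bigl(B_n+B_{n+1}+v_0^{\prime}\bigr)\bigl(B_{n+1}-B_n\bigr)$ then collapses neatly, since $(B_n+B_{n+1})(B_{n+1}-B_n)=B_{n+1}^2-B_n^2$, leaving the key relation
\begin{equation*}
A_n(z)+B_{n+1}^2(z)-B_n^2(z)+v_0^{\prime}(z)\bigl(B_{n+1}(z)-B_n(z)\bigr)=\beta_{n+1}A_{n+1}(z)A_n(z)-\beta_n A_n(z)A_{n-1}(z).
\end{equation*}

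Next I would set $G_n(z):=B_n^2(z)+v_0^{\prime}(z)B_n(z)-\beta_n A_n(z)A_{n-1}(z)$, so that the displayed identity reads $G_{n+1}(z)-G_n(z)=-A_n(z)$. Summing this from $0$ to $n-1$ telescopes to $G_n(z)-G_0(z)=-\sum_{k=0}^{n-1}A_k(z)$, that is,
\begin{equation*}
B_n^2(z)+v_0^{\prime}(z)B_n(z)+\sum_{k=0}^{n-1}A_k(z)=\beta_n A_n(z)A_{n-1}(z)+G_0(z).
\end{equation*}
It then remains to check the base value $G_0(z)=0$. Using the standard initial conventions consistent with $\beta_0 P_{-1}=0$, namely $B_0(z)=0$ and $\beta_0=0$, one has $G_0(z)=B_0^2+v_0^{\prime}B_0-\beta_0 A_0A_{-1}=0$, which yields $(S_2^{\prime})$ exactly.

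I expect the only genuine obstacle to be the careful justification of the boundary case, i.e. confirming that the definitions (\ref{2.3}) and (\ref{2.2}) of $B_n$ and $A_n$ degenerate to $B_0(z)=0$ and $G_0(z)=0$ under the conventions already fixed in (\ref{1.4}); everything preceding it is routine algebra driven by the two substitutions above. In particular, the expansion of $(B_n+B_{n+1}+v_0^{\prime})(B_{n+1}-B_n)$ and the recognition of the telescoping structure are the substantive steps, while the appearance of $v_0^{\prime}(z)B_n(z)$ in $(S_2^{\prime})$ is seen to come entirely from the $v_0^{\prime}$-term introduced through $(S_1)$.
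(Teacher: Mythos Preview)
Your proof is correct and follows essentially the same route as the paper: multiply $(S_2)$ by $A_n(z)$, substitute $(z-\alpha_n)A_n(z)$ from $(S_1)$ to obtain the telescoping identity $A_n+B_{n+1}^2-B_n^2+v_0^{\prime}(B_{n+1}-B_n)=\beta_{n+1}A_{n+1}A_n-\beta_nA_nA_{n-1}$, and then sum using the initial conditions $B_0(z)=0$ and $\beta_0A_{-1}(z)=0$. The only cosmetic difference is that you package the telescoping via the auxiliary function $G_n(z)$, which is fine.
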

\begin{proof}
Using ($S_1$) with multiplying ($S_2$) by $A_n(z)$ on both sides, we have
\begin{align*}
&A_n(z)+B_{n+1}^2(z)-B_n^2(z)+v_0^{\prime}(z)(B_{n+1}(z)-B_n(z))\\
=&\beta_{n+1}A_{n+1}(z)A_n(z)-\beta_nA_n(z)A_{n-1}(z).
\end{align*}
Taking a telescopic sum together with the appropriate "initial conditions", $B_0(z)=0$ and $\beta_0A_{-1}(z)=0$, we obtain the desired result.
\end{proof}

\begin{remark}
The compatibility conditions ($S_1$), ($S_2$) and ($S_2^{\prime}$) are valid for $z\in\textbf{C}\cup\{\infty\}$.

\end{remark}

Using Theorem \ref{T1} and Theorem \ref{T2}, we obtain the second order linear ordinary differential equation satisfied by $P_n(z)$.
\begin{theorem}
The monic orthogonal polynomials $P_n(z)$ satisfy the second order differential equation,
\begin{equation}\label{2.13}
P_n^{\prime\prime}(z)+Q_n(z)P_n^{\prime}(z)+T_n(z)P_n(z)=0
\end{equation}
where
\begin{equation}\label{2.14}
Q_n(z)=-v_0^{\prime}(z)-\frac{A_n^{\prime}(z)}{A_n(z)},
\end{equation}
\begin{equation}\label{2.15}
T_n(z)=B_n^{\prime}(z)-\frac{B_n(z)A_n^{\prime}(z)}{A_n(z)}+\beta_nA_n(z)A_{n-1}(z)-B_n^2(z)-B_n(z)v_0^{\prime}(z).
\end{equation}
\end{theorem}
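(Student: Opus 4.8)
The plan is to eliminate $P_{n-1}(z)$ between the lowering operator (\ref{2.1}) and its $z$-derivative, using the raising operator recorded in the Remark following Theorem \ref{T1} to treat the term $P_{n-1}^{\prime}(z)$ that arises upon differentiation. Since (\ref{2.1}) already expresses $P_n^{\prime}$ as a linear combination of $P_n$ and $P_{n-1}$, differentiating once and substituting back should collapse everything into a closed second-order relation in $P_n$, $P_n^{\prime}$ and $P_n^{\prime\prime}$ alone, whose coefficients are exactly the $Q_n$ and $T_n$ of (\ref{2.14}) and (\ref{2.15}).

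Concretely, I would proceed as follows. First, solve (\ref{2.1}) algebraically for $P_{n-1}$, writing $\beta_nA_n(z)P_{n-1}(z)=P_n^{\prime}(z)+B_n(z)P_n(z)$; this is the single relation I will use repeatedly to remove every surviving $P_{n-1}$. Second, differentiate (\ref{2.1}) with respect to $z$:
\begin{equation*}
P_n^{\prime\prime}(z)=\beta_nA_n^{\prime}(z)P_{n-1}(z)+\beta_nA_n(z)P_{n-1}^{\prime}(z)-B_n^{\prime}(z)P_n(z)-B_n(z)P_n^{\prime}(z).
\end{equation*}
Third, replace $P_{n-1}^{\prime}(z)$ using the raising operator $P_{n-1}^{\prime}(z)=-A_{n-1}(z)P_n(z)+(B_n(z)+v_0^{\prime}(z))P_{n-1}(z)$. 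After this substitution the right-hand side contains $P_{n-1}$ only through the combinations $\beta_nA_n^{\prime}P_{n-1}$ and $\beta_nA_n(B_n+v_0^{\prime})P_{n-1}$, and the algebraic relation from the first step rewrites these as $\frac{A_n^{\prime}}{A_n}(P_n^{\prime}+B_nP_n)$ and $(B_n+v_0^{\prime})(P_n^{\prime}+B_nP_n)$, leaving an expression in $P_n$ and $P_n^{\prime}$ only.

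Collecting the coefficient of $P_n^{\prime}$ then yields $\frac{A_n^{\prime}}{A_n}+(B_n+v_0^{\prime})-B_n=v_0^{\prime}+\frac{A_n^{\prime}}{A_n}$, whose negative is $Q_n$ as in (\ref{2.14}); collecting the coefficient of $P_n$ yields $\frac{A_n^{\prime}B_n}{A_n}-\beta_nA_nA_{n-1}+B_n^2+v_0^{\prime}B_n-B_n^{\prime}$, whose negative is $T_n$ as in (\ref{2.15}). I do not anticipate a genuine obstacle: the work is linear bookkeeping, and the only point demanding care is applying the raising operator with the correct index shift and signs, since a single slip there would corrupt both $Q_n$ and $T_n$. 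It is worth noting that the supplementary condition $(S_2^{\prime})$ is \emph{not} required for the stated form of $T_n$; one could optionally use it to trade $\beta_nA_nA_{n-1}$ for $B_n^2+v_0^{\prime}B_n+\sum_{k=0}^{n-1}A_k(z)$, but this merely rewrites (\ref{2.15}) and is unnecessary for the theorem as stated.
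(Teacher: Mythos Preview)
Your proposal is correct and is precisely the standard derivation that the paper defers to via its citations \cite{ZC2019,CFZ2019,DZY,MC2019}; the paper gives no in-text argument of its own, so your eliminate-$P_{n-1}$ scheme using the lowering operator, its derivative, and the raising operator is exactly the intended route.
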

\begin{proof}
See \cite{ZC2019,CFZ2019,DZY,MC2019}.
\end{proof}

\section{Deformed Hermite weight}
\noindent Recalling Theorem \ref{T1},
\begin{align*}
A_n(z)=\frac{1}{h_n}\int_{\textbf{R}}P_n^2(y)\frac{v_0^{\prime}(z)-v_0^{\prime}(y)}{z-y}w(y)dy+a_n(t),
\end{align*}
\begin{equation*}
B_n(z)=\frac{1}{h_{n-1}}\int_{\textbf{R}}P_n(y)P_{n-1}(y)\frac{v_0^{\prime}(z)-v_0^{\prime}(y)}{z-y}w(y)dy+b_n(t),
\end{equation*}
\begin{equation*}
a_n(t)=\frac{\gamma}{h_n}\int_{\textbf{R}}\frac{P_n^2(y)}{(z-y)(y-t)}w(y)dy
\end{equation*}
and
\begin{equation*}
b_n(t)=\frac{\gamma}{h_{n-1}}\int_{\textbf{R}}\frac{P_n(y)P_{n-1}(y)}{(z-y)(y-t)}w(y)dy,
\end{equation*}
we have the following expression of $A_n(z)$ and $B_n(z)$.
\begin{theorem}
As $n\rightarrow\infty$, it yields
\begin{equation}\label{3.1}
A_n(z)=2+\frac{R_n(t)}{z}+\frac{\gamma+tR_n(t)}{z^2}+\frac{\gamma\alpha_n+\gamma t+t^2R_n(t)}{z^3}+\mathcal{O}(z^{-4})
\end{equation}
and
\begin{equation}\label{3.2}
B_n(z)=\frac{r_n(t)}{z}+\frac{tr_n(t)}{z^2}+\frac{\gamma\beta_n+t^2r_n(t)}{z^3}+\mathcal{O}(z^{-4}),
\end{equation}
where
\begin{equation}\label{3.1.1}
R_n(t)=\frac{\gamma}{h_n}\int_{\mathbf{R}}\frac{P_n^2(y)w(y)}{y-t}dy
\end{equation}
and
\begin{equation}\label{3.2.1}
r_n(t)=\frac{\gamma}{h_{n-1}}\int_{\mathbf{R}}\frac{P_n(y)P_{n-1}(y)w(y)}{y-t}dy.
\end{equation}
\end{theorem}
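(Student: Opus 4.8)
The plan is to reduce both expansions to elementary moment computations by first exploiting the quadratic form of $v_0$. Since $w_0(z)=e^{-z^2+tz}$, we have $v_0(z)=-\ln w_0(z)=z^2-tz$ and hence $v_0^{\prime}(z)=2z-t$. The decisive simplification is that the divided difference appearing in (\ref{2.2})--(\ref{2.3}) collapses to a constant:
\begin{equation*}
\frac{v_0^{\prime}(z)-v_0^{\prime}(y)}{z-y}=\frac{2(z-y)}{z-y}=2.
\end{equation*}
Substituting this into the definition of $A_n(z)$ gives $\frac{1}{h_n}\int_{\textbf{R}}2P_n^2(y)w(y)\,dy=2$, so $A_n(z)=2+a_n(t)$; substituting into $B_n(z)$ and invoking the orthogonality $\int_{\textbf{R}}P_n(y)P_{n-1}(y)w(y)\,dy=0$ annihilates the leading term, so $B_n(z)=b_n(t)$. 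Thus all $z$-dependence beyond the constant $2$ resides in the Cauchy-type integrals $a_n$ and $b_n$ of (\ref{2.4})--(\ref{2.5}).

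Next I would expand $a_n(t)$ and $b_n(t)$ in inverse powers of $z$ by inserting the geometric series
\begin{equation*}
\frac{1}{z-y}=\sum_{k=0}^{\infty}\frac{y^k}{z^{k+1}}
\end{equation*}
and integrating term by term. This attaches to each power $z^{-k-1}$ the coefficients $\frac{\gamma}{h_n}\int_{\textbf{R}}\frac{y^kP_n^2(y)}{y-t}w(y)\,dy$ for $A_n$ and $\frac{\gamma}{h_{n-1}}\int_{\textbf{R}}\frac{y^kP_n(y)P_{n-1}(y)}{y-t}w(y)\,dy$ for $B_n$. To evaluate these I would strip the factor $y-t$ from the denominator by polynomial division, using $\frac{y}{y-t}=1+\frac{t}{y-t}$ and $\frac{y^2}{y-t}=y+t+\frac{t^2}{y-t}$, so that each coefficient splits into an ordinary moment of $P_n^2$ (respectively $P_nP_{n-1}$) plus a multiple of $R_n(t)$ (respectively $r_n(t)$) coming from the leftover $\frac{1}{y-t}$ term, recognised via (\ref{3.1.1})--(\ref{3.2.1}).

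The ordinary moments are then read off from standard orthogonality identities: $\frac{1}{h_n}\int_{\textbf{R}}P_n^2w\,dy=1$, $\frac{1}{h_n}\int_{\textbf{R}}yP_n^2w\,dy=\alpha_n$, $\int_{\textbf{R}}P_nP_{n-1}w\,dy=0$, and $\frac{1}{h_{n-1}}\int_{\textbf{R}}yP_nP_{n-1}w\,dy=\beta_n$, the last two following from the recurrence (\ref{1.4}) together with (\ref{1.7}). Collecting the first three powers of $1/z$ then produces, for $A_n$, the coefficients $R_n(t)$, then $\gamma+tR_n(t)$, then $\gamma\alpha_n+\gamma t+t^2R_n(t)$, and, for $B_n$, the coefficients $r_n(t)$, then $tr_n(t)$, then $\gamma\beta_n+t^2r_n(t)$, which are exactly (\ref{3.1}) and (\ref{3.2}).

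I expect the only genuine subtlety to be the justification of the term-by-term integration. Because the orthogonality interval is all of $\textbf{R}$, the geometric series does not converge on the support of $w$, so (\ref{3.1})--(\ref{3.2}) must be understood as asymptotic expansions in $z$ rather than convergent ones; the algebra of the preceding paragraphs is otherwise a routine bookkeeping of three moments. The Gaussian factor in $w$ is what makes each moment $\frac{\gamma}{h_n}\int_{\textbf{R}}y^kP_n^2(y)(y-t)^{-1}w(y)\,dy$ well defined (these are precisely the quantities entering $R_n$ and $r_n$, whose existence for $\gamma>-1$ is built into their definitions), and its rapid decay lets one bound the tail of the expansion uniformly, so that the remainder after three terms is genuinely $\mathcal{O}(z^{-4})$. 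Hence the main work is isolating the constant divided difference and organising the moment reductions correctly; the convergence caveat is the only point requiring care.
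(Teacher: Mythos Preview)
Your proposal is correct and follows essentially the same route as the paper: both reduce $A_n$ and $B_n$ to $2+a_n(t)$ and $b_n(t)$ via the constant divided difference for the quadratic potential, expand $\tfrac{1}{z-y}$ as a geometric series, and evaluate the resulting coefficients $\tfrac{\gamma}{h_n}\int\tfrac{y^k P_n^2 w}{y-t}\,dy$ (and their $P_nP_{n-1}$ analogues) by the splitting $y=(y-t)+t$, $y^2=(y^2-t^2)+t^2$ together with the orthogonality and recurrence identities. Your write-up is in fact slightly more explicit than the paper's about why the non-Cauchy parts collapse to $2$ and $0$, and you add a remark on the asymptotic (rather than convergent) nature of the $1/z$ expansion that the paper omits.
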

\begin{proof}
As $n\rightarrow\infty$,
\begin{align*}
\frac{1}{z-y}&=\frac{1}{z}\cdot\frac{1}{1-\frac{y}{z}}=\frac{1}{z}\bigg(1+\frac{y}{z}+\frac{y^2}{z^2}+\mathcal{O}(z^{-3})\bigg)\\
&=\frac{1}{z}+\frac{y}{z^2}+\frac{y^2}{z^3}+\mathcal{O}(z^{-4}),
\end{align*}
we can rewrite $a_n(t)$ and $b_n(t)$, respectively,
\begin{align}\label{3.1.2}
a_n(t)=&\frac{\gamma}{zh_n}\int_{\mathbf{R}}\frac{P_n^2(y)w(y)}{y-t}dy+\frac{\gamma}{z^2h_n}\int_{\mathbf{R}}\frac{yP_n^2(y)w(y)}{y-t}dy+\frac{\gamma}{z^3h_n}\int_{\mathbf{R}}\frac{y^2P_n^2(y)w(y)}{y-t}dy\nonumber\\
&+\mathcal{O}(z^{-4}),
\end{align}
\begin{align}\label{3.2.2}
b_n(t)=&\frac{\gamma}{zh_{n-1}}\int_{\mathbf{R}}\frac{P_n(y)P_{n-1}(y)w(y)}{y-t}dy+\frac{\gamma}{z^2h_{n-1}}\int_{\mathbf{R}}\frac{yP_n(y)P_{n-1}(y)w(y)}{y-t}dy\nonumber\\
&+\frac{\gamma}{z^3h_{n-1}}\int_{\mathbf{R}}\frac{y^2P_n(y)P_{n-1}(y)w(y)}{y-t}dy+\mathcal{O}(z^{-4}).
\end{align}
Note
\begin{align}\label{3.1.3}
&\frac{\gamma}{h_n}\int_{\mathbf{R}}\frac{yP_n^2(y)w(y)}{y-t}dy=\frac{\gamma}{h_n}\int_{\mathbf{R}}\frac{((y-t)+t)P_n^2(y)w(y)}{y-t}dy\nonumber\\
=&\frac{\gamma}{h_n}\bigg(\int_{\mathbf{R}}P_n^2(y)w(y)dy+t\int_{\mathbf{R}}\frac{P_n^2(y)w(y)}{y-t}dy\bigg)\nonumber\\
=&\gamma+\frac{\gamma t}{h_n}\int_{\mathbf{R}}\frac{P_n^2(y)w(y)}{y-t}dy,
\end{align}
\begin{align}\label{3.1.4}
&\frac{\gamma}{h_n}\int_{\mathbf{R}}\frac{y^2P_n^2(y)w(y)}{y-t}dy=\frac{\gamma}{h_n}\int_{\mathbf{R}}\frac{((y^2-t^2)+t^2)P_n^2(y)w(y)}{y-t}dy\nonumber\\
=&\frac{\gamma}{h_n}\bigg(\int_{\mathbf{R}}(y+t)P_n^2(y)w(y)dy+t^2\int_{\mathbf{R}}\frac{P_n^2(y)w(y)}{y-t}dy\bigg)\nonumber\\
=&\frac{\gamma}{h_n}\bigg(\alpha_nh_n+th_n+t^2\int_{\mathbf{R}}\frac{P_n^2(y)w(y)}{y-t}dy\bigg)\nonumber\\
=&\gamma\alpha_n+\gamma t+\frac{\gamma t^2}{h_n}\int_{\textbf{R}}\frac{P_n^2(y)w(y)}{y-t}dy,
\end{align}
\begin{align}\label{3.2.3}
&\frac{\gamma}{h_{n-1}}\int_{\mathbf{R}}\frac{yP_n(y)P_{n-1}(y)w(y)}{y-t}dy=\frac{\gamma}{h_{n-1}}\int_{\mathbf{R}}\frac{((y-t)+t)P_n(y)P_{n-1}(y)w(y)}{y-t}dy\nonumber\\
=&\frac{\gamma}{h_{n-1}}\bigg(\int_{\mathbf{R}}P_n(y)P_{n-1}(y)w(y)dy+t\int_{\mathbf{R}}\frac{P_n(y)P_{n-1}(y)w(y)}{y-t}dy\bigg)\nonumber\\
=&\frac{\gamma t}{h_{n-1}}\int_{\mathbf{R}}\frac{P_n(y)P_{n-1}(y)w(y)}{y-t}dy,
\end{align}
and
\begin{align}\label{3.2.4}
&\frac{\gamma}{h_{n-1}}\int_{\mathbf{R}}\frac{y^2P_n(y)P_{n-1}(y)w(y)}{y-t}dy=\frac{\gamma}{h_{n-1}}\int_{\mathbf{R}}\frac{((y^2-t^2)+t^2)P_n(y)P_{n-1}(y)w(y)}{y-t}dy\nonumber\\
=&\frac{\gamma}{h_{n-1}}\bigg(\int_{\mathbf{R}}(y+t)P_n(y)P_{n-1}(y)w(y)dy+t^2\int_{\mathbf{R}}\frac{P_n(y)P_{n-1}(y)w(y)}{y-t}dy\bigg)\nonumber\\
=&\frac{\gamma}{h_{n-1}}\bigg(\beta_nh_{n-1}+t^2\int_{\mathbf{R}}\frac{P_n(y)P_{n-1}(y)w(y)}{y-t}dy\bigg)\nonumber\\
=&\gamma\beta_n+\frac{\gamma t^2}{h_{n-1}}\int_{\textbf{R}}\frac{P_n(y)P_{n-1}(y)w(y)}{y-t}dy,
\end{align}
where we used (\ref{1.4}).
Substituting (\ref{3.1.3}), (\ref{3.1.4}), (\ref{3.2.3}) and (\ref{3.2.4}) into (\ref{3.1.2}) and (\ref{3.2.2}), the theorem is established.
\end{proof}

Putting (\ref{3.1}) and (\ref{3.2}) into ($S_1$), and comparing the constants and coefficients of $\frac{1}{z}$, respectively, we have
\begin{equation}\label{3.3}
2\alpha_n=t+R_n(t),
\end{equation}
and
\begin{equation}\label{3.4}
r_{n+1}(t)+r_n(t)=\gamma+(t-\alpha_n)R_n(t).
\end{equation}
Carrying out a similar calculation with ($S_2$) and ($S_2^{\prime}$), they give rise to another three identities:
\begin{equation}\label{3.5}
1-r_n(t)+r_{n+1}(t)=2\beta_{n+1}-2\beta_n,
\end{equation}
\begin{equation}\label{3.6}
(t-\alpha_n)(r_{n+1}(t)-r_n(t))=\beta_{n+1}R_{n+1}(t)-\beta_nR_{n-1}(t)
\end{equation}
and
\begin{equation}\label{3.7}
\sum\limits_{j=0}^{n-1}R_j(t)=2\beta_n\left(R_{n-1}(t)+R_n(t)\right)-tr_n(t).
\end{equation}
A telescopic sum of (\ref{3.5}) gives
\begin{equation}\label{3.8}
n+r_n(t)=2\beta_n.
\end{equation}
Multiplying $R_n(t)$ on both sides of (\ref{3.6}),
\begin{equation*}
(t-\alpha_n)R_n(t)(r_{n+1}(t)-r_n(t))=\beta_{n+1}R_{n+1}(t)R_n(t)-\beta_nR_{n-1}(t)R_n(t),
\end{equation*}
then eliminating $(t-\alpha_n)R_n(t)$ by (\ref{3.4}), it yields
\begin{equation*}
(r_{n+1}^2(t)-r_n^2(t)-\gamma)(r_{n+1}(t)-r_n(t))=\beta_{n+1}R_{n+1}(t)R_n(t)-\beta_nR_{n-1}(t)R_n(t).
\end{equation*}
A telescopic sum of above equation,
\begin{equation}\label{3.9}
r_n^2(t)-\gamma r_n(t)=\beta_n R_{n-1}(t)R_n(t).
\end{equation}
\begin{rem}
(\ref{3.3}), (\ref{3.4}), (\ref{3.5}), (\ref{3.6}), (\ref{3.7}), (\ref{3.8}) and (\ref{3.9}) are generated from ($S_1$), ($S_2$) and ($S_2^{\prime}$), which are significant for the next discussions.
\end{rem}

\section{The $t$ dependance}
\noindent Mention again that the weight (\ref{1.1}) depends on $t$, it means all of the quantities considered in this paper can be viewed as functions in $t$. In this section, we will investigate their dependance with respect to their parameter.

Taking a derivative with respect to $t$ in (\ref{1.3}) when $i=j=n$,
\begin{equation*}
h_n(t,\gamma)=\int_{\mathbf{R}}P_n^2(z,t,\gamma)w(z.t,\gamma)dz=\int_{\mathbf{R}}P_n^2(z,t,\gamma)e^{-z^2+tz}|z-t|^{\gamma}(A+B\theta(z-t))dz,
\end{equation*}
we find
\begin{align}\label{4.1}
\frac{\partial h_n(t,\gamma)}{\partial t}&=\int_{\mathbf{R}}P_n^2(z,t,\gamma)w(z,t,\gamma)zdz+\int_{\mathbf{R}}\frac{\gamma P_n^2(z,t,\gamma)w(z,t,\gamma)}{t-z}dz\nonumber\\
&=\alpha_n h_n-\gamma \int_{\mathbf{R}}\frac{P_n^2(z,t,\gamma)w(z,t,\gamma)}{z-t}dz\nonumber\\
&=\alpha_n h_n-R_n(t)h_n,
\end{align}
where
\begin{equation*}
\frac{\partial(|z-t|^{\gamma})}{\partial t}=\frac{\partial(|t-z|^{\gamma})}{\partial t}=\bigg((t-z)^{\gamma}-(z-t)^{\gamma}\bigg)\delta(t-z)+\frac{\gamma|t-z|^{\gamma}}{t-z}
\end{equation*}
and the definition of $R_n(t)$ are used.
Straightforward,
\begin{equation}\label{4.2}
\frac{d\ln h_n(t)}{dt}=\frac{1}{h_n(t)}\frac{dh_n(t)}{dt}=\alpha_n-R_n(t)=\frac{t}{2}-\frac{R_n(t)}{2},
\end{equation}
where $\alpha_n$ is eliminated by (\ref{3.3}).

It follows
\begin{align*}
\frac{d\ln \beta_n}{dt}&=\frac{1}{h_n(t)}\frac{dh_n(t)}{dt}-\frac{1}{h_{n-1}(t)}\frac{dh_{n-1}(t)}{dt}\nonumber\\
&=\frac{1}{2}(R_{n-1}(t)-R_n(t)),
\end{align*}
where we used (\ref{1.7}),
and we can pose the derivative of $\beta_n$ by the difference between $R_{n-1}(t)$ and $R_n(t)$,
\begin{equation*}
\frac{d \beta_n}{dt}=\frac{\beta_n}{2}(R_{n-1}(t)-R_n(t)).
\end{equation*}

On the other hand, similarly steps, taking a derivative of
 \begin{equation*}
 \int_{\mathbf{R}}P_n(z,t,\gamma)P_{n-1}(z,t,\gamma)w(z,t,\gamma)dz=0
 \end{equation*}
 with respect to $t$ and using the definition of $r_n(t)$, it produces
\begin{align*}
0=&\int_{\mathbf{R}}\frac{\partial P_n(z,t,\gamma)}{\partial t}P_{n-1}(z,t,\gamma)w(z,t,\gamma)dz+\int_{\mathbf{R}}P_n(z,t,\gamma)P_{n-1}(z,t,\gamma)w(z,t,\gamma)zdz\\
&+\int_{\mathbf{R}}P_n(z,t,\gamma)P_{n-1}(z,t,\gamma)e^{-z^2+tz}\frac{\partial(|z-t|^{\gamma})}{\partial t}(A+B\theta(z-t))dz\\
&+\int_{\mathbf{R}}P_n(z,t,\gamma)P_{n-1}(z,t,\gamma)e^{-z^2+tz}|z-t|^{\gamma}\frac{\partial(A+B\theta(z-t))}{\partial t}dz\\
=&h_{n-1}\frac{d\textbf{p}(n,t)}{dt}+\beta_nh_{n-1}-\gamma\int_{\mathbf{R}}\frac{P_n(z)P_{n-1}(z)w(z)}{z-t}dz\\
=&h_{n-1}\frac{d\textbf{p}(n,t)}{dt}+\beta_nh_{n-1}-r_n(t)h_{n-1},
\end{align*}
simplify,
\begin{equation}\label{4.4}
\frac{d\textbf{p}(n,t)}{dt}=r_n(t)-\beta_n.
\end{equation}
Differential (\ref{1.6}) and applying (\ref{4.4}),
\begin{align*}
\alpha_n^{\prime}(t)&=r_n(t)-\beta_n-r_{n+1}(t)+\beta_{n+1}\nonumber\\
&=\frac{1}{2}(r_n(t)-r_{n+1}(t)+1),
\end{align*}
since (\ref{3.8}).
We summarize in the following theorem.
\begin{theorem}
The recurrence coefficients $\alpha_n$ and $\beta_n$ are expressed in terms of $r_n(t)$, $r_{n+1}(t)$, $R_{n-1}(t)$ and $R_n(t)$ as follows:
\begin{equation}\label{4.3}
\beta_n^{\prime}(t)=\frac{\beta_n}{2}\left(R_{n-1}(t)-R_n(t)\right),
\end{equation}
and
\begin{align}\label{4.5}
\alpha_n^{\prime}(t)&=\frac{1}{2}(r_n(t)-r_{n+1}(t)+1).
\end{align}
\end{theorem}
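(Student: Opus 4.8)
The plan is to obtain both identities directly from the $t$-dependence of the weight, using orthogonality to annihilate the unwanted terms, and then closing the argument with the algebraic relations (\ref{3.3}) and (\ref{3.8}) already in hand. The two basic ingredients are the logarithmic $t$-derivative of the squared norm $h_n$ and the $t$-evolution of the subleading coefficient $\textbf{p}(n,t,\gamma)$.

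For (\ref{4.3}) I would start from $h_n=\int_{\mathbf{R}}P_n^2(z)w(z)\,dz$ and differentiate in $t$. Because $P_n$ is monic, $\partial_t P_n$ has degree at most $n-1$, so the piece $2\int(\partial_t P_n)P_n w\,dz$ vanishes by orthogonality and only $\int P_n^2\,\partial_t w\,dz$ survives. Writing $\partial_t w = z\,w + \gamma\,w/(t-z) + (\text{singular terms})$, via $\partial_t e^{-z^2+tz}=z\,e^{-z^2+tz}$ together with the distributional $\partial_t|z-t|^{\gamma}$, the first term contributes $\alpha_n h_n$ through the three-term recurrence (since $\int z P_n^2 w = \alpha_n h_n$), and the second contributes $-R_n h_n$ by the definition (\ref{3.1.1}) of $R_n$. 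This gives $h_n'=(\alpha_n-R_n)h_n$, hence by (\ref{3.3}) one has $\frac{d}{dt}\ln h_n=\tfrac12(t-R_n)$. Forming $\frac{d}{dt}\ln\beta_n=\frac{d}{dt}\ln h_n-\frac{d}{dt}\ln h_{n-1}$ cancels the common $t/2$ and leaves $\tfrac12(R_{n-1}-R_n)$; multiplying by $\beta_n$ yields (\ref{4.3}).

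For (\ref{4.5}) I would differentiate the cross-orthogonality $\int_{\mathbf{R}}P_n P_{n-1} w\,dz=0$ in $t$. The $\partial_t P_{n-1}$ contribution drops (degree $\le n-2$), while $\int(\partial_t P_n)P_{n-1}w = \frac{d\textbf{p}(n,t)}{dt}\,h_{n-1}$, since only the $z^{n-1}$ coefficient of $\partial_t P_n$ pairs nontrivially with $P_{n-1}$. The weight-derivative terms produce $\int z P_n P_{n-1} w = \beta_n h_{n-1}$ and $-r_n h_{n-1}$ from the definition (\ref{3.2.1}) of $r_n$. Solving gives (\ref{4.4}), namely $\frac{d\textbf{p}(n,t)}{dt}=r_n-\beta_n$. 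Differentiating (\ref{1.6}) and substituting then gives $\alpha_n'=(r_n-\beta_n)-(r_{n+1}-\beta_{n+1})$, and using (\ref{3.8}) to write $\beta_{n+1}-\beta_n=\tfrac12(1+r_{n+1}-r_n)$ collapses this to $\tfrac12(r_n-r_{n+1}+1)$.

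The main obstacle is justifying that the distributional contributions arising from $\partial_t|z-t|^{\gamma}$ and from $\partial_t\theta(z-t)=-\delta(z-t)$ make no net contribution to the final identities. Near $z=t$ these singular pieces carry the factor $|z-t|^{\gamma}$ (or the difference $(t-z)^{\gamma}-(z-t)^{\gamma}$), which vanishes at $z=t$ when $\gamma>0$, so the delta terms are genuinely zero in that range and the identities extend to $\gamma>-1$ by analyticity in $\gamma$. The only nonroutine step is to confirm that for $-1<\gamma\le 0$ these singular contributions either cancel between the two sources or integrate to zero; everything else is bookkeeping with the recurrence and the relations (\ref{3.3}), (\ref{3.8}).
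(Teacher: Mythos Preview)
Your proposal is correct and follows essentially the same route as the paper: differentiate $h_n$ and the cross-orthogonality relation in $t$, identify the weight-derivative contributions with $R_n$ and $r_n$ via (\ref{3.1.1}) and (\ref{3.2.1}), and close with (\ref{3.3}), (\ref{1.6}), (\ref{3.8}). If anything you are more explicit than the paper about the delta-function terms from $\partial_t|z-t|^{\gamma}$ and $\partial_t\theta(z-t)$, which the paper's computation of (\ref{4.1}) and (\ref{4.4}) simply drops without comment.
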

\begin{theorem}
The auxiliary quantities $r_n(t)$ and $R_n(t)$ satisfy the coupled Riccati equations
\begin{equation}\label{4.6}
r_n^{\prime}(t)=\frac{r_n^2(t)-\gamma r_n(t)}{R_n(t)}-\frac{(r_n(t)+n)R_n(t)}{2},
\end{equation}
\begin{equation}\label{4.7}
R_n^{\prime}(t)=2r_n(t)-\frac{1}{2}(t-R_n(t))R_n(t)-\gamma.
\end{equation}
\end{theorem}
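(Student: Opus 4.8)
The plan is to obtain both Riccati equations purely by algebraically recombining the identities already established in Sections~3 and~4, together with one trivial $t$-differentiation in each case; no fresh integration-by-parts argument is required. The whole point is that the two nonlinear right-hand sides are disguised forms of $\alpha_n^{\prime}$ and $\beta_n^{\prime}$ read through the quadratic constraints $(\ref{3.3})$, $(\ref{3.8})$ and $(\ref{3.9})$.

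For the equation $(\ref{4.7})$ governing $R_n^{\prime}(t)$, I would start from $(\ref{3.3})$ in the form $R_n(t)=2\alpha_n-t$. Differentiating in $t$ gives $R_n^{\prime}(t)=2\alpha_n^{\prime}(t)-1$, and substituting $(\ref{4.5})$ collapses this immediately to $R_n^{\prime}(t)=r_n(t)-r_{n+1}(t)$. It then remains only to check that this agrees with the claimed right-hand side $2r_n-\frac{1}{2}(t-R_n)R_n-\gamma$, which is equivalent to the identity $r_n(t)+r_{n+1}(t)=\gamma+\frac{1}{2}(t-R_n(t))R_n(t)$. But this is precisely $(\ref{3.4})$ once one rewrites the factor $t-\alpha_n$ via $(\ref{3.3})$ as $t-\alpha_n=\frac{1}{2}(t-R_n(t))$. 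Thus $(\ref{4.7})$ follows with essentially no computation.

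For the equation $(\ref{4.6})$ governing $r_n^{\prime}(t)$, the key is to simplify the apparently complicated rational right-hand side. Using $(\ref{3.9})$ to write $\frac{r_n^2-\gamma r_n}{R_n}=\beta_n R_{n-1}$, and using $(\ref{3.8})$ in the form $r_n+n=2\beta_n$ so that $\frac{(r_n+n)R_n}{2}=\beta_n R_n$, the entire right-hand side telescopes to $\beta_n(R_{n-1}(t)-R_n(t))$. On the other side, differentiating $(\ref{3.8})$ written as $r_n=2\beta_n-n$ gives $r_n^{\prime}=2\beta_n^{\prime}$, and $(\ref{4.3})$ turns $2\beta_n^{\prime}$ into $\beta_n(R_{n-1}-R_n)$. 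The two expressions coincide, which establishes $(\ref{4.6})$.

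The main obstacle here is not analytic but combinatorial: one must spot the correct pairing of prior identities that makes each nonlinear right-hand side degenerate into a derivative. Concretely, the bridges to notice are that $(\ref{3.9})$ and $(\ref{3.8})$ together linearize the rational term in $(\ref{4.6})$ into $\beta_n R_{n-1}-\beta_n R_n$, and that the coefficient $t-\alpha_n$ in $(\ref{3.4})$ equals $\frac{1}{2}(t-R_n)$ by $(\ref{3.3})$. Once these two observations are in hand, each verification is a one-line identity, so the labor is bookkeeping rather than genuine difficulty. A useful internal consistency check is that both derivations are compatible with the structural relations $R_n=2\alpha_n-t$ and $r_n=2\beta_n-n$, i.e.\ the Riccati system is nothing more than $(\ref{4.5})$ and $(\ref{4.3})$ re-expressed through the quadratic constraint $(\ref{3.9})$.
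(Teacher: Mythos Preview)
Your proof is correct and follows essentially the same route as the paper: for $(\ref{4.6})$ you differentiate $(\ref{3.8})$, invoke $(\ref{4.3})$, and then use $(\ref{3.9})$ and $(\ref{3.8})$ to rewrite $\beta_n(R_{n-1}-R_n)$ in terms of $r_n$ and $R_n$; for $(\ref{4.7})$ you differentiate $(\ref{3.3})$, invoke $(\ref{4.5})$, and then use $(\ref{3.4})$ with $t-\alpha_n=\tfrac12(t-R_n)$ to eliminate $r_{n+1}$. The only cosmetic difference is that you verify each equation by reducing both sides to a common expression, whereas the paper derives the right-hand side sequentially, but the identities and substitutions used are identical.
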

\begin{proof}
Taking a derivative of (\ref{3.8}) with respect to $t$,
\begin{equation}\label{4.8}
2\beta_n^{\prime}(t)=r_n^{\prime}(t),
\end{equation}
eliminating $\beta_n^{\prime}(t)$ by (\ref{4.3}),
\begin{equation*}
r_{n}^{\prime}(t)=\beta_nR_{n-1}(t)-\beta_nR_n(t),
\end{equation*}
then eliminating $\beta_nR_{n-1}(t)$ by (\ref{3.9}), and eliminating another $\beta_n$ by (\ref{3.8}), respectively, we arrive (\ref{4.6}).

Combining (\ref{3.4}) and (\ref{3.3}), and eliminating $\alpha_n$,
\begin{equation}\label{4.9}
r_{n+1}(t)+r_n(t)=\gamma+\frac{1}{2}(t-R_n(t))R_n(t).
\end{equation}
Substituting (\ref{4.9}) into (\ref{4.5}), and eliminating $r_{n+1}(t)$,
\begin{equation}\label{4.10}
\alpha_n^{\prime}(t)=r_n(t)-\frac{1}{4}(t-R_n(t))R_n(t)+\frac{1-\gamma}{2}.
\end{equation}
We note the derivative on both sides of (\ref{3.3}),
\begin{equation}\label{4.11}
2\alpha_n^{\prime}(t)=1+R_n^{\prime}(t).
\end{equation}
A simple computation with combining (\ref{4.10}) and (\ref{4.11}) and eliminating $\alpha_n^{\prime}(t)$, (\ref{4.7}) is confirmed.
\end{proof}

\section{$\sigma-$Form}
 \noindent In this section, we introduce two auxiliary quantities $\sigma_n(t)$ and $\hat\sigma_n(t)$, and obtain their second order non-linear differential or difference equations. \\
\indent Define a quantity allied to the Hankel determinant,
 \begin{equation*}
 \sigma_n(t):=\frac{d}{dt}\ln D_n(t)=\frac{d}{dt}\sum\limits_{j=0}^{n-1}\ln h_j(t)=\sum\limits_{j=0}^{n-1}\frac{d}{dt}\left(\ln h_j(t)\right).
 \end{equation*}
Due to (\ref{4.2}), (\ref{3.3}) and (\ref{1.8}),
\begin{align}\label{3.10}
\sigma_n(t)&=\frac{nt}{2}-\frac{1}{2}\sum\limits_{j=0}^{n-1}R_j(t)=nt-\sum\limits_{j=0}^{n-1}\alpha_j=\textbf{p}(n,t,\gamma)+nt.
\end{align}
\begin{theorem}\label{T4}
The quantity $\sigma_n(t)$ satisfies
\begin{equation}\label{3.11}
(\sigma_n^{\prime\prime}(t))^2=\frac{1}{4}(t\sigma_n^{\prime}(t)-\sigma_n(t))^2-4(\sigma_n^{\prime}(t)-\frac{n}{2})\sigma_n^{\prime}(t)(\sigma_n^{\prime}(t)-\frac{n+\gamma}{2}),
\end{equation}
which is a particular Jimbo-Miwa-Okamoto $\sigma$-form of the Painlev\'{e} IV equation \cite{JM1981}.
\end{theorem}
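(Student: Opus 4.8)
The plan is to express everything in terms of $\sigma_n(t)$ and its derivatives, starting from the representation
\begin{equation*}
\sigma_n(t)=\frac{nt}{2}-\frac{1}{2}\sum_{j=0}^{n-1}R_j(t),
\end{equation*}
which already appears in (\ref{3.10}). The first step is to differentiate this once to obtain $\sigma_n^{\prime}(t)$ in terms of the $R_j$ and, via (\ref{4.7}), in terms of the $r_j$; more usefully, I would use the identities from Section~3 to write $\sigma_n^{\prime}(t)$ and $\sigma_n^{\prime\prime}(t)$ directly in terms of the single-index quantities $R_n(t)$ and $r_n(t)$. The natural route is to differentiate (\ref{3.10}) in the form $\sigma_n=\mathbf{p}(n,t,\gamma)+nt$, giving $\sigma_n^{\prime}(t)=\mathbf{p}^{\prime}(n,t)+n=r_n(t)-\beta_n+n$ by (\ref{4.4}), and then invoke (\ref{3.8}) to collapse $-\beta_n+n$ to $(n-r_n(t))/2$, so that
\begin{equation*}
\sigma_n^{\prime}(t)=\frac{n+r_n(t)}{2},\qquad\text{hence}\qquad r_n(t)=2\sigma_n^{\prime}(t)-n.
\end{equation*}

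The second step is to get $R_n(t)$ in terms of $\sigma$-data. Differentiating (\ref{3.10}) in its first form gives $\sigma_n^{\prime}(t)=\frac{n}{2}-\frac12\sum_{j=0}^{n-1}R_j^{\prime}(t)$, but the cleaner path is the telescoping identity (\ref{3.7}) together with (\ref{3.8}) and (\ref{3.9}); combining these should yield an algebraic expression for $\sum_{j=0}^{n-1}R_j(t)$, and hence for $\sigma_n(t)$ itself, purely in terms of $R_n(t)$, $R_{n-1}(t)$ and $r_n(t)$. More directly, I expect $R_n(t)$ to be recoverable from $\sigma_n^{\prime\prime}(t)$: differentiating $r_n(t)=2\sigma_n^{\prime}(t)-n$ and substituting the Riccati equation (\ref{4.6}) expresses $\sigma_n^{\prime\prime}$ in terms of $r_n$, $R_n$ and $\gamma$, while the second Riccati equation (\ref{4.7}) gives a second relation. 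The key is that these two Riccati equations (\ref{4.6})--(\ref{4.7}) form a closed first-order system in $(r_n,R_n)$; eliminating $R_n$ between them is exactly what produces a second-order ODE for $r_n$, equivalently for $\sigma_n^{\prime}$.

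The third step is the elimination itself. Having $r_n=2\sigma_n^{\prime}-n$, I would use (\ref{4.6}) to solve for $R_n$ rationally in terms of $r_n$ and $r_n^{\prime}=2\sigma_n^{\prime\prime}$ (noting $r_n^2-\gamma r_n = \beta_n R_{n-1}R_n$ from (\ref{3.9}) keeps $R_n$ appearing only through the combination in (\ref{4.6})), and then substitute that $R_n$ into (\ref{4.7}), turning (\ref{4.7}) into a single second-order equation for $\sigma_n$. The algebra will be heavy: (\ref{4.6}) is quadratic in $R_n$, so solving for $R_n$ introduces a term $R_n=\bigl(2r_n-\gamma-R_n^{\prime}\bigr)$-type feedback through (\ref{4.7}), and one must clear denominators carefully. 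I anticipate the cleanest organization is to treat (\ref{4.7}) as the definition $2r_n=R_n^{\prime}+\frac12(t-R_n)R_n+\gamma$, substitute $r_n=2\sigma_n^{\prime}-n$ to express $R_n^{\prime}$ and thus relate $R_n$ to $\sigma$-quantities, then feed this into (\ref{4.6}); after squaring to remove the square-root ambiguity in solving the quadratic, the stated form (\ref{3.11}) should emerge, with the factor structure $4(\sigma_n^{\prime}-\tfrac{n}{2})\sigma_n^{\prime}(\sigma_n^{\prime}-\tfrac{n+\gamma}{2})$ reflecting the three roots $r_n=0$, $r_n=n$ (i.e.\ $\beta_n=n$), and $r_n=\gamma$ that arise naturally from (\ref{3.8}) and (\ref{3.9}).

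The main obstacle will be controlling the elimination of $R_n(t)$ cleanly enough that the right-hand side factors into the advertised product. Since (\ref{4.6}) is quadratic in $R_n$ while $\sigma_n^{\prime\prime}$ enters only linearly, the natural move is to isolate $R_n$ from (\ref{4.7}) (where it also appears quadratically through $(t-R_n)R_n$) and cross-check consistency; getting the $(\sigma_n^{\prime\prime})^2$ on the left with the correct coefficient $1$, and matching the $\tfrac14(t\sigma_n^{\prime}-\sigma_n)^2$ term, requires using $r_n=2\sigma_n^{\prime}-n$ and the relation $R_n=t-2\alpha_n$ from (\ref{3.3}) together with $\alpha_n=\sigma_n/n + \dots$-type bookkeeping via (\ref{3.10}). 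I would verify the final identity by confirming it reduces correctly in the symmetric case and matches the Jimbo--Miwa--Okamoto normalization of \cite{JM1981}.
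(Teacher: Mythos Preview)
Your first step is correct and matches the paper: from $\sigma_n=\mathbf{p}(n,t,\gamma)+nt$ together with (\ref{4.4}) and (\ref{3.8}) one gets $\sigma_n^{\prime}=\tfrac{1}{2}(n+r_n)$, i.e.\ $r_n=2\sigma_n^{\prime}-n$ and $r_n^{\prime}=2\sigma_n^{\prime\prime}$.

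The gap is in your elimination strategy. In your third step you propose to eliminate $R_n$ using the Riccati pair (\ref{4.6})--(\ref{4.7}). But neither of these equations contains $\sigma_n$ itself; they involve only $r_n$, $r_n^{\prime}$, $R_n$, $R_n^{\prime}$. Any elimination using this pair alone produces a differential relation in $\sigma_n^{\prime}$ and $\sigma_n^{\prime\prime}$ (and, once you bring in $R_n^{\prime}$, also $\sigma_n^{\prime\prime\prime}$), never the undifferentiated $\sigma_n$ that appears in the target identity through $(t\sigma_n^{\prime}-\sigma_n)^2$. Your remark about ``$\alpha_n=\sigma_n/n+\dots$'' does not salvage this: (\ref{3.10}) relates $\sigma_n$ to $\sum_j\alpha_j$, not to $\alpha_n$ individually, so (\ref{3.3}) alone does not recover $\sigma_n$ from $R_n$.

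What actually brings $\sigma_n$ into play is precisely the route you mention in your second step and then set aside: combine (\ref{3.7}) with (\ref{3.8}) and (\ref{3.9}) to obtain the \emph{algebraic} relation
\[
(n+r_n)R_n+\frac{2(r_n^2-\gamma r_n)}{R_n}=tr_n+nt-2\sigma_n,
\]
which is (\ref{3.12}) in the paper. This, not (\ref{4.7}), is the second equation to pair with (\ref{4.6}). The paper then rewrites (\ref{4.6}) as
\[
\frac{2(r_n^2-\gamma r_n)}{R_n}-(n+r_n)R_n=2r_n^{\prime},
\]
adds and subtracts these two identities to isolate $\dfrac{4(r_n^2-\gamma r_n)}{R_n}$ and $2(n+r_n)R_n$ separately, and multiplies the results; the product kills $R_n$ and yields
\[
8r_n(r_n+n)(r_n-\gamma)=(tr_n+nt-2\sigma_n)^2-4(r_n^{\prime})^2.
\]
Substituting $r_n=2\sigma_n^{\prime}-n$ gives (\ref{3.11}) directly, with the factor structure you anticipated. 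So keep your first two steps, but replace the Riccati-pair elimination by the sum/difference/product argument with (\ref{3.12}) in place of (\ref{4.7}).
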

\begin{proof}
In view of (\ref{3.10}),
\begin{equation}\label{3.11.1}
nt-2\sigma_n(t)=\sum\limits_{j=0}^{n-1}R_j(t).
\end{equation}
Combining (\ref{3.7}) and above equation,
\begin{equation*}
nt-2\sigma_n(t)=2\beta_nR_{n-1}(t)+2\beta_nR_n(t)-tr_n(t),
\end{equation*}
then eliminating $\beta_nR_{n-1}(t)$ and $\beta_n$ by (\ref{3.9}) and (\ref{3.8}), respectively,
\begin{equation}\label{3.12}
(n+r_n(t))R_n(t)+\frac{2(r_n^2(t)-\gamma r_n(t))}{R_n(t)}=t r_n(t)-2\sigma_n(t)+nt.
\end{equation}

On account of (\ref{4.6}),
\begin{equation}\label{3.13}
\frac{2(r_n^2(t)-\gamma r_n(t))}{R_n(t)}-(n+r_n(t))R_n(t)=2r_n^{\prime}(t).
\end{equation}
The sum of (\ref{3.12}) and (\ref{3.13}) gives
\begin{equation}\label{3.14}
\frac{4(r_n^2(t)-\gamma r_n(t))}{R_n(t)}=tr_n(t)-2\sigma_n(t)+nt+2r_n^{\prime}(t);
\end{equation}
oppositely, (\ref{3.12}) minus (\ref{3.13}),
\begin{equation}\label{3.15}
2(n+r_n(t))R_n(t)=tr_n(t)-2\sigma_n(t)+nt-2r_n^{\prime}(t).
\end{equation}
Then the product of (\ref{3.14}) and (\ref{3.15}) is
\begin{equation}\label{3.16}
8r_n(t)(r_n(t)+n)(r_n(t)-\gamma)=(tr_n(t)-2\sigma_n(t)+nt)^2-4(r_n^{\prime}(t))^2.
\end{equation}

Taking the derivative of (\ref{3.10}) with respect to $t$ gives to
\begin{equation}\label{3.17}
\sigma_n^{\prime}(t)=\textbf{p}^{\prime}(n,t,\gamma)+n,
\end{equation}
and eliminating $\beta_n$ between (\ref{3.8}) and (\ref{4.4}),
\begin{equation}\label{3.18}
\textbf{p}^{\prime}(n,t,\gamma)=\frac{r_n(t)}{2}-\frac{n}{2}.
\end{equation}
Inserting (\ref{3.18}) into (\ref{3.17}), and eliminating $\textbf{p}^{\prime}(n,t,\gamma)$,
\begin{equation}\label{3.19}
\sigma_n^{\prime}(t)=\frac{r_n(t)}{2}+\frac{n}{2}.
\end{equation}
Some simplifications after substituting (\ref{3.19}) into (\ref{3.16}), we get (\ref{3.11}) immediately.
\end{proof}
\begin{remark}
$\sigma_n(t)$ satisfies the continuous $\sigma-$form of the Painlev\'{e} IV, which theorem \ref{T4} is coincident with Tracy and Widom \cite{TW1994}.
\end{remark}
\begin{proposition}
We have
\begin{align}\label{3.25}
\sigma_n(t)=&\frac{d}{dt}\ln D_n(t)\nonumber\\
=&-\frac{(R_n^{\prime}(t))^2}{4R_n(t)}+\frac{R_n^3(t)}{16}-\frac{t}{8}R_n^2(t)+\bigg(\frac{t^2}{16}-\frac{n}{2}
-\frac{\gamma}{4}\bigg)R_n(t)+\frac{(2n+\gamma)t}{4}\nonumber\\
&+\frac{\gamma^2}{4R_n(t)}
\end{align}
\end{proposition}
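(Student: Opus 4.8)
The plan is to eliminate the auxiliary quantity $r_n(t)$ altogether and express everything through $R_n(t)$ and its derivative. The natural starting point is equation (\ref{3.12}), already established inside the proof of Theorem \ref{T4}, which isolates $\sigma_n(t)$:
\begin{equation*}
(n+r_n(t))R_n(t)+\frac{2(r_n^2(t)-\gamma r_n(t))}{R_n(t)}=tr_n(t)-2\sigma_n(t)+nt.
\end{equation*}
Solving this for $\sigma_n(t)$ gives $2\sigma_n(t)=tr_n(t)+nt-(n+r_n(t))R_n(t)-\frac{2(r_n^2(t)-\gamma r_n(t))}{R_n(t)}$, so the whole task reduces to substituting a closed form for $r_n(t)$ in terms of $R_n(t)$ and $R_n'(t)$.

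That closed form is read off directly from the Riccati equation (\ref{4.7}). Rearranging $R_n'(t)=2r_n(t)-\frac12(t-R_n(t))R_n(t)-\gamma$ yields
\begin{equation*}
r_n(t)=\frac{R_n'(t)}{2}+\frac{tR_n(t)}{4}-\frac{R_n^2(t)}{4}+\frac{\gamma}{2}.
\end{equation*}
The one genuinely useful device is to set $u:=\frac{R_n'(t)}{2}+\frac{tR_n(t)}{4}-\frac{R_n^2(t)}{4}$, so that $r_n=u+\frac{\gamma}{2}$ and $r_n-\gamma=u-\frac{\gamma}{2}$; hence $r_n^2-\gamma r_n=r_n(r_n-\gamma)=u^2-\frac{\gamma^2}{4}$. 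This is precisely what produces the isolated term $\frac{\gamma^2}{4R_n(t)}$ in the final formula, while packaging the remaining $1/R_n$ contribution as the single square $u^2/R_n$.

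Finally I would substitute, expand $u^2$, and collect. The term $-\frac{2u^2}{R_n}$ supplies $-\frac{(R_n'(t))^2}{2R_n(t)}$ together with the cubic and quadratic pieces in $R_n$, and one checks that the mixed first-derivative terms $tR_n'$ and $R_nR_n'$ cancel exactly against those arising from $tr_n$ and $-(n+r_n)R_n$; dividing by $2$ then gives (\ref{3.25}). The main obstacle is purely bookkeeping: correctly tracking the several $1/R_n$ contributions coming out of $\frac{2(r_n^2-\gamma r_n)}{R_n}$ and verifying that all cross terms involving $R_n'(t)$ vanish, leaving the clean polynomial-plus-$1/R_n$ shape. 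No identity beyond (\ref{3.12}) and (\ref{4.7}) is required.
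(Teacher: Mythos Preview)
Your proposal is correct and follows essentially the same route as the paper: the paper's equation (\ref{3.27}) is just (\ref{3.12}) rearranged (it is re-derived there from (\ref{3.7})--(\ref{3.9}) and (\ref{3.11.1}), the same ingredients that produced (\ref{3.12}) in the first place), and the paper then eliminates $r_n(t)$ via (\ref{4.7}) exactly as you do. Your substitution $u=r_n-\gamma/2$ is a tidy bookkeeping device that the paper does not spell out, but the underlying identities and logical flow are the same.
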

\begin{proof}
For (\ref{3.7}), eliminating $\beta_nR_{n-1}(t)$ and $\beta_n$ by (\ref{3.9}) and (\ref{3.8}),
\begin{equation}\label{3.26}
\sum\limits_{j=0}^{n-1}R_j(t)=\frac{2(r_n^2(t)-\gamma r_n(t))}{R_n(t)}+(n+r_n(t))R_n(t)-tr_n(t).
\end{equation}
Substituting (\ref{3.26}) into (\ref{3.11.1}),
\begin{align}\label{3.27}
\sigma_n(t)=\frac{nt}{2}-\frac{r_n^2(t)-\gamma r_n(t)}{R_n(t)}-\frac{(n+r_n(t))R_n(t)}{2}+\frac{tr_n(t)}{2}.
\end{align}
Combining (\ref{4.7}) and (\ref{3.27}), and eliminating $r_n(t)$, the proof is completed.

\end{proof}

Also define the other quantity
\begin{equation}\label{3.20}
 \hat{\sigma}_n(t):=-\sum\limits_{j=0}^{n-1}R_j(t).
\end{equation}
\begin{theorem}
A second-order difference equation satisfied by $\hat{\sigma}_n(t)$,
\begin{align}\label{3.21}
&2(n\hat{\sigma}_{n-1}+\hat{\sigma}_{n}-n\hat{\sigma}_{n+1})^2-2\gamma(n\hat{\sigma}_{n-1}+\hat{\sigma}_n-n\hat{\sigma}_{n+1})(t-\hat{\sigma}_{n-1}+\hat{\sigma}_{n+1})\nonumber\\
=&(\hat{\sigma}_{n-1}-\hat{\sigma}_n)(\hat{\sigma}_n-\hat{\sigma}_{n+1})(\hat{\sigma}_n+nt)(\hat{\sigma}_{n+1}-\hat{\sigma}_{n-1}+t),
\end{align}
which is a discrete $\sigma$-form equation.
\end{theorem}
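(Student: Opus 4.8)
The plan is to reduce the second-order difference equation (\ref{3.21}) to the purely algebraic identities (\ref{3.8}) and (\ref{3.9}), by rewriting every composite appearing in (\ref{3.21}) in terms of just three quantities: $r_n(t)$, the product $R_{n-1}(t)R_n(t)$, and the single shifted combination $W:=t-R_{n-1}(t)-R_n(t)$. First I would exploit the definition (\ref{3.20}), $\hat\sigma_n=-\sum_{j=0}^{n-1}R_j$, whose first differences give $\hat\sigma_{n-1}-\hat\sigma_n=R_{n-1}$ and $\hat\sigma_n-\hat\sigma_{n+1}=R_n$, hence $\hat\sigma_{n-1}-\hat\sigma_{n+1}=R_{n-1}+R_n$ and $\hat\sigma_{n+1}-\hat\sigma_{n-1}+t=t-R_{n-1}-R_n=W$. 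With these substitutions the right-hand side of (\ref{3.21}) already collapses to $R_{n-1}R_n\,(\hat\sigma_n+nt)\,W$.

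Next I would express $\hat\sigma_n$ itself through the auxiliary quantities. Combining (\ref{3.7}) with (\ref{3.8}) and (\ref{3.9}), exactly as in the derivation of (\ref{3.26})--(\ref{3.27}), yields $\hat\sigma_n=tr_n-(n+r_n)(R_{n-1}+R_n)$. Feeding this into the two genuinely nontrivial composites of (\ref{3.21}) produces the key factorizations
\[
n\hat\sigma_{n-1}+\hat\sigma_n-n\hat\sigma_{n+1}=n(R_{n-1}+R_n)+\hat\sigma_n=r_n\,W,\qquad \hat\sigma_n+nt=(n+r_n)\,W,
\]
so that both composites carry the common factor $W$, and $r_n$ is recovered from the three consecutive $\hat\sigma$'s as $r_n=(n\hat\sigma_{n-1}+\hat\sigma_n-n\hat\sigma_{n+1})/W$.

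Finally, inserting these into (\ref{3.21}) I expect the left-hand side to become $2(r_nW)^2-2\gamma(r_nW)W=2W^2(r_n^2-\gamma r_n)$ and the right-hand side to become $R_{n-1}R_n(n+r_n)W^2$. After cancelling $W^2$, the asserted identity (\ref{3.21}) is then equivalent to $2(r_n^2-\gamma r_n)=(n+r_n)R_{n-1}R_n$, which is precisely (\ref{3.9}), namely $r_n^2-\gamma r_n=\beta_nR_{n-1}R_n$, combined with (\ref{3.8}), namely $n+r_n=2\beta_n$. The main obstacle is not a single hard estimate but the bookkeeping that forces the particular linear combination $n\hat\sigma_{n-1}+\hat\sigma_n-n\hat\sigma_{n+1}$ to factor through $W$: one must verify that the coefficients $(n,1,-n)$ are exactly those for which the $R_{n-1}+R_n$ contributions cancel, so that $r_n$ indeed becomes a function of $\hat\sigma_{n-1},\hat\sigma_n,\hat\sigma_{n+1}$ alone and subsequently cancels, leaving a bona fide difference equation in the three consecutive values of $\hat\sigma_n$.
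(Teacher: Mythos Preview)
Your proposal is correct and follows essentially the same route as the paper: the paper too uses $R_n=\hat\sigma_n-\hat\sigma_{n+1}$, combines (\ref{3.7}) with (\ref{3.8}) to solve $r_n=\dfrac{n(\hat\sigma_{n-1}-\hat\sigma_{n+1})+\hat\sigma_n}{t-\hat\sigma_{n-1}+\hat\sigma_{n+1}}$, rewrites (\ref{3.9}) via (\ref{3.8}) as $r_n^2-\gamma r_n=\tfrac12(n+r_n)(\hat\sigma_{n-1}-\hat\sigma_n)(\hat\sigma_n-\hat\sigma_{n+1})$, and then substitutes. Your explicit identification of the common factor $W=t-R_{n-1}-R_n$ and the factorizations $n\hat\sigma_{n-1}+\hat\sigma_n-n\hat\sigma_{n+1}=r_nW$, $\hat\sigma_n+nt=(n+r_n)W$ make the final cancellation more transparent than the paper's ``insert and simplify,'' but the underlying identities and logic are identical.
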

\begin{proof}
From (\ref{3.20}),
\begin{equation}\label{3.22}
R_n(t)=\hat{\sigma}_n(t)-\hat{\sigma}_{n+1}(t),
\end{equation}
we can rewrite (\ref{3.7}) with the aid of (\ref{3.8}) and (\ref{3.22}), i.e.
\begin{equation*}
(n+r_n(t))(\hat{\sigma}_{n-1}(t)-\hat{\sigma}_{n+1}(t))-tr_n(t)=-\hat{\sigma}_n(t),
\end{equation*}
solving $r_n(t)$,
\begin{equation}\label{3.23}
r_n(t)=\frac{n(\hat{\sigma}_{n-1}(t)-\hat{\sigma}_{n+1}(t))+\hat{\sigma}_n(t)}{t-\hat{\sigma}_{n-1}(t)+\hat{\sigma}_{n+1}(t)}.
\end{equation}
Similarly steps to recall (\ref{3.9}) by using (\ref{3.8}) and (\ref{3.22}),
\begin{equation}\label{3.24}
r_n^2(t)-\gamma r_n(t)=\frac{1}{2}(n+r_n(t))(\hat{\sigma}_{n-1}(t)-\hat{\sigma}_n(t))(\hat{\sigma}_n(t)-\hat{\sigma}_{n+1}(t)).
\end{equation}
Inserting (\ref{3.23}) into (\ref{3.24}) and simplify, it yields (\ref{3.21}).
\end{proof}
\begin{rem}
When $\gamma=0$, (\ref{3.21}) is a discrete $\sigma$-form of Painlev\'{e} \textbf{IV} equation, which corresponds to Min and Chen \cite{MC2019}.
\end{rem}

\section{Painlev\'{e} IV and asymptotic behaviors}
\begin{theorem}
The quantity $R_n(t)$ satisfies the following second order differential equation:
\begin{equation}\label{4.12}
R_n^{\prime\prime}(t)=\frac{(R_n^{\prime}(t))^2}{2R_n(t)}+\frac{3}{8}R_n^3(t)-\frac{t}{2}R_n^2(t)+\frac{1}{8}\left(t^2-8n-4-4\gamma\right)R_n(t)-\frac{\gamma^2}{2R_n(t)},
\end{equation}
which is a particular Painlev\'{e} IV equation. Moreover, let $\widetilde{R}_n:=\frac{1}{2}R_n(t)$ and $u:=-\frac{1}{4}t$, then
\begin{equation*}
\widetilde{R}_n^{\prime\prime}(u)=\frac{(\widetilde{R}_n^{\prime}(u))^2}{2\widetilde{R}_n(u)}+\frac{3}{2}\widetilde{R}_n^3(u)+4u\widetilde{R}_n^2(u)+2\left(u^2-\theta_1\right)\widetilde{R}_n(u)+\frac{\theta_2}{\widetilde{R}_n(u)},
\end{equation*}
which satisfies the Painlev\'{e} IV equation\cite{GLS2002} with $\theta_1=\frac{2n+1+\gamma}{4}$, $\theta_2=-\frac{\gamma^2}{8}$.
\end{theorem}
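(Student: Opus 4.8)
The plan is to decouple the coupled Riccati pair (\ref{4.6})--(\ref{4.7}) into a single second-order ODE for $R_n$ by eliminating $r_n$ algebraically. Since (\ref{4.7}) is \emph{linear} in $r_n$, the first and decisive move is to solve it for $r_n$:
\begin{equation*}
r_n=\tfrac12 R_n^{\prime}+\tfrac14(t-R_n)R_n+\tfrac{\gamma}{2},
\end{equation*}
which expresses $r_n$ as a polynomial in $R_n$, $R_n^{\prime}$ and $t$. Once $r_n$ is available in closed form, every occurrence of $r_n$ (and, after differentiating, of $r_n^{\prime}$) in the system can be rewritten purely in terms of $R_n$ and its derivatives, so that the second equation of the pair becomes the sought ODE.

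Next I would differentiate (\ref{4.7}) once in $t$, using $\frac{d}{dt}[(t-R_n)R_n]=R_n+(t-2R_n)R_n^{\prime}$, to obtain
\begin{equation*}
R_n^{\prime\prime}=2r_n^{\prime}-\tfrac12\big(R_n+(t-2R_n)R_n^{\prime}\big),
\end{equation*}
and then replace $r_n^{\prime}$ by the right-hand side of (\ref{4.6}), namely $2r_n^{\prime}=\frac{2(r_n^2-\gamma r_n)}{R_n}-(r_n+n)R_n$. Writing $r_n=P+\tfrac{\gamma}{2}$ with $P:=\tfrac12 R_n^{\prime}+\tfrac14(t-R_n)R_n$ is the convenient bookkeeping device here, because the combination $r_n^2-\gamma r_n=P^2-\tfrac{\gamma^2}{4}$ cleanly isolates the $\gamma^2$ contribution. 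Upon division by $R_n$ this $-\tfrac{\gamma^2}{4}$ supplies exactly the $-\frac{\gamma^2}{2R_n}$ term of (\ref{4.12}), while the $(R_n^{\prime})^2$ piece inside $P^2$ supplies the $\frac{(R_n^{\prime})^2}{2R_n}$ term.

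The main obstacle is purely the algebraic bookkeeping in this final substitution. Expanding $P^2$ produces six terms, and one must verify that the first-derivative cross terms $tR_n^{\prime}$ and $R_nR_n^{\prime}$ arising from $\frac{2(r_n^2-\gamma r_n)}{R_n}$ cancel exactly against those coming from $-(r_n+n)R_n$ and from $-\tfrac12(t-2R_n)R_n^{\prime}$, so that no first-derivative term survives apart from $\frac{(R_n^{\prime})^2}{2R_n}$. Collecting the coefficients of $R_n^3$, $tR_n^2$, $t^2R_n$ and $R_n$, the term linear in $R_n$ consolidates into $\tfrac18\big(t^2-8n-4-4\gamma\big)R_n$, which is precisely (\ref{4.12}). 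I expect this verification of the exact cancellations, rather than any conceptual difficulty, to be the most error-prone step.

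Finally, to exhibit (\ref{4.12}) as a canonical Painlev\'{e} IV, I would apply an affine change of variables $R_n=c\,\widetilde{R}_n$, $t=d\,u$, and fix the constants $c,d$ by requiring the cubic and quadratic coefficients to reduce to the canonical $\tfrac32$ and $4u$ of the standard form in \cite{GLS2002}; the residual coefficients multiplying $\widetilde{R}_n$ and $\widetilde{R}_n^{-1}$ then read off the parameters $\theta_1$ and $\theta_2$. This last step is a routine substitution once (\ref{4.12}) has been established.
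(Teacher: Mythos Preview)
Your proposal is correct and follows essentially the same route as the paper: solve (\ref{4.7}) linearly for $r_n$, then feed this into (\ref{4.6}) (differentiating to introduce $R_n''$) and simplify. The paper's proof is stated more tersely---``solve for $r_n$ from (\ref{4.7}) and substitute into (\ref{4.6})''---but your unpacked version, including the bookkeeping device $P=r_n-\tfrac{\gamma}{2}$ so that $r_n^2-\gamma r_n=P^2-\tfrac{\gamma^2}{4}$, makes the cancellations transparent and is exactly what the simplification amounts to.
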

\begin{proof}
Solving for $r_n(t)$ from (\ref{4.7}),
\begin{equation*}
r_n(t)=\frac{1}{2}R_n^{\prime}(t)+\frac{1}{4}(t-R_n(t))R_n(t)+\frac{\gamma}{2},
\end{equation*}
and substituting the solution into (\ref{4.6}), simplify, we find (\ref{4.12}). \\
Let $\widetilde{R}_n:=\frac{1}{2}R_n(t)$ and $u:=-\frac{1}{4}t$, it is to see that $\widetilde{R}_n(u)$ takes the form
\begin{equation*}
\widetilde{R}_n^{\prime\prime}(u)=\frac{(\widetilde{R}_n^{\prime}(u))^2}{2\widetilde{R}_n(u)}+\frac{3}{2}\widetilde{R}_n^3(u)+4u\widetilde{R}_n^2(u)+2\left(u^2-\frac{n}{2}-\frac{1}{4}-\frac{\gamma}{4}\right)\widetilde{R}_n(u)-\frac{\gamma^2}{8\widetilde{R}_n(u)}.
\end{equation*}
\end{proof}
\begin{remark}
  A second-order differential equation satisfied by $r_n(t)$, which is related to the Chazy equation \cite{LC2017, MC2019}, can also be obtained by solving for $R_n(t)$ from (\ref{4.6}) and substituting the result of $R_n(t)$ into (\ref{4.7}). We do not write it down due to its too complicated.
\end{remark}

Disregard the derivative in (\ref{4.12}) and get a quartic equation
\begin{equation}\label{4.13}
\frac{3}{8}R_n^4(t)-\frac{t}{2}R_n^3(t)+\frac{1}{8}(t^2-8n-4-4\gamma)R_n^2(t)-\frac{\gamma^2}{2}=0.
\end{equation}
If $\gamma=0$ for (\ref{4.13}), $X_n(t)$ satisfies a quadratic equation,
\begin{equation*}
    3X_n^2(t)-4tX_n(t)+t^2-8n-4=0,
\end{equation*}
with the solutions
\begin{equation*}
    X_n(t)=\frac{2t\pm\sqrt{t^2+24n+12}}{3}.
\end{equation*}
Due to the definition of weight (\ref{1.1}), we separate two parts to discuss the asymptotic behaviors of $R_n(t)$.\\
(i)If $B>0$, we choose
\begin{equation*}
 X_n(t)=\frac{2t+\sqrt{t^2+24n+12}}{3},
\end{equation*}
as $n\rightarrow\infty$,
\begin{equation*}
X_n(t)=\frac{2\sqrt{6}}{3}n^{\frac{1}{2}}+\frac{2t}{3}+\frac{12+t^2}{12\sqrt{6}}n^{-\frac{1}{2}}-\frac{(12+t^2)^2}{1152\sqrt{6}}n^{-\frac{3}{2}}+\mathcal{O}(n^{-\frac{5}{2}}).
\end{equation*}
Hence, we assume the expansion of $R_n(t)$, as $n\rightarrow\infty$,
\begin{equation}\label{5.1}
R_n(t)=\sum\limits_{j=0}^{n-1}d_j(t)n^{\frac{1-j}{2}}.
\end{equation}
Substituting (\ref{5.1}) into (\ref{4.12}), we obtain
\begin{align*}
d_0(t)&=\frac{2\sqrt{6}}{3},~~~~d_1(t)=\frac{2t}{3},~~~~d_2(t)=\frac{\sqrt{6}(t^2+12\gamma+12)}{72},~~~~d_3(t)=0,\\
d_4(t)&=\frac{\sqrt{6}(288\gamma^2-24t^2\gamma-288\gamma-t^4-24t^2-240)}{6912},~~~~d_5(t)=\frac{(2-9\gamma^2)t}{72},
\end{align*}
which means
\begin{align*}
R_n(t)=&\frac{2\sqrt{6}}{3}n^{\frac{1}{2}}+\frac{2t}{3}+\frac{\sqrt{6}(t^2+12\gamma+12)}{72}n^{-\frac{1}{2}}\nonumber\\
&+\frac{\sqrt{6}(288\gamma^2-24t^2\gamma-288\gamma-t^4-24t^2-240)}{6912}n^{-\frac{3}{2}}+\frac{(2-9\gamma^2)t}{72}n^{-2}\nonumber\\
&+\mathcal{O}(n^{-\frac{5}{2}}).
\end{align*}
(ii)If $B<0$, we choose
\begin{equation*}
 X_n(t)=\frac{2t-\sqrt{t^2+24n+12}}{3},
\end{equation*}
as $n\rightarrow\infty$,
\begin{equation*}
X_n(t)=-\frac{2\sqrt{6}}{3}n^{\frac{1}{2}}+\frac{2t}{3}-\frac{12+t^2}{12\sqrt{6}}n^{-\frac{1}{2}}+\frac{(12+t^2)^2}{1152\sqrt{6}}n^{-\frac{3}{2}}+\mathcal{O}(n^{-\frac{5}{2}}).
\end{equation*}
Similarly, we assume the expansion of $R_n(t)$, as $n\rightarrow\infty$,
\begin{equation}\label{5.3}
R_n(t)=\sum\limits_{j=0}^{n-1}\widetilde{d}_j(t)n^{\frac{1-j}{2}}.
\end{equation}
Substituting (\ref{5.3}) into (\ref{4.12}), we obtain
\begin{align*}
\widetilde{d}_0(t)&=-\frac{2\sqrt{6}}{3},~~~~\widetilde{d}_1(t)=\frac{2t}{3},~~~~\widetilde{d}_2(t)=-\frac{\sqrt{6}(t^2+12\gamma+12)}{72},~~~~\widetilde{d}_3(t)=0,\\
\widetilde{d}_4(t)&=-\frac{\sqrt{6}(288\gamma^2-24t^2\gamma-288\gamma-t^4-24t^2-240)}{6912},~~~~\widetilde{d}_5(t)=\frac{(2-9\gamma^2)t}{72},
\end{align*}
which means
\begin{align*}
R_n(t)=&-\frac{2\sqrt{6}}{3}n^{\frac{1}{2}}+\frac{2t}{3}-\frac{\sqrt{6}(t^2+12\gamma+12)}{72}n^{-\frac{1}{2}}\nonumber\\
&-\frac{\sqrt{6}(288\gamma^2-24t^2\gamma-288\gamma-t^4-24t^2-240)}{6912}n^{-\frac{3}{2}}+\frac{(2-9\gamma^2)t}{72}n^{-2}\nonumber\\
&+\mathcal{O}(n^{-\frac{5}{2}}).
\end{align*}
Straightforward, we summarize the following item.
\begin{theorem}
As $n\rightarrow\infty$, the quantity $R_n(t)$ has the asymptotic expansions:
\begin{itemize}
  \item if $B>0$
  \begin{align}\label{5.2}
R_n(t)=&\frac{2\sqrt{6}}{3}n^{\frac{1}{2}}+\frac{2t}{3}+\frac{\sqrt{6}(t^2+12\gamma+12)}{72}n^{-\frac{1}{2}}\nonumber\\
&+\frac{\sqrt{6}(288\gamma^2-24t^2\gamma-288\gamma-t^4-24t^2-240)}{6912}n^{-\frac{3}{2}}+\frac{(2-9\gamma^2)t}{72}n^{-2}\nonumber\\
&+\mathcal{O}(n^{-\frac{5}{2}});
\end{align}
  \item if $B<0$
  \begin{align}\label{5.4}
R_n(t)=&-\frac{2\sqrt{6}}{3}n^{\frac{1}{2}}+\frac{2t}{3}-\frac{\sqrt{6}(t^2+12\gamma+12)}{72}n^{-\frac{1}{2}}\nonumber\\
&-\frac{\sqrt{6}(288\gamma^2-24t^2\gamma-288\gamma-t^4-24t^2-240)}{6912}n^{-\frac{3}{2}}+\frac{(2-9\gamma^2)t}{72}n^{-2}\nonumber\\
&+\mathcal{O}(n^{-\frac{5}{2}}).
\end{align}
\end{itemize}
\end{theorem}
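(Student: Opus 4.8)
The plan is to treat the Painlev\'e IV equation (\ref{4.12}) as a second-order nonlinear ODE in the variable $t$ in which $n$ enters as a large parameter, and to construct a formal asymptotic solution in descending half-integer powers of $n$. The first step is to fix the leading order by a dominant-balance argument. Since the $t$-derivatives act only on the ($t$-dependent) coefficients of the expansion and do not change the power of $n$, the terms $R_n^{\prime\prime}(t)$, $\frac{(R_n^{\prime}(t))^2}{2R_n(t)}$ and $\frac{\gamma^2}{2R_n(t)}$ are all subleading; the two largest contributions come from $\frac{3}{8}R_n^3(t)$ and from the $-nR_n(t)$ piece inside $\frac{1}{8}(t^2-8n-4-4\gamma)R_n(t)$. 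Balancing these forces $R_n(t)=\mathcal{O}(n^{1/2})$, and writing $R_n(t)\sim d_0(t)n^{1/2}$ the order-$n^{3/2}$ equation reads $\frac{3}{8}d_0^3-d_0=0$, whence $d_0=\pm\frac{2\sqrt{6}}{3}$. This is precisely the content of the quartic (\ref{4.13}): discarding the derivative terms in (\ref{4.12}) reduces it to (\ref{4.13}), which for $\gamma=0$ passes to a quadratic whose roots $\frac{2t\pm\sqrt{t^2+24n+12}}{3}$ display both the $n^{1/2}$ scaling and the two admissible branches.

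Next I would select the branch according to the sign of $B$, taking $d_0=+\frac{2\sqrt{6}}{3}$ when $B>0$ and $d_0=-\frac{2\sqrt{6}}{3}$ when $B<0$, and then insert the ansatz (\ref{5.1}) (respectively (\ref{5.3})) into the full equation (\ref{4.12}). This requires expanding the reciprocal $\frac{1}{R_n(t)}$ and the combination $\frac{(R_n^{\prime}(t))^2}{R_n(t)}$ as power series in $n^{-1/2}$, which is legitimate since $d_0\neq0$. Equating to zero the coefficient of each power $n^{(3-j)/2}$ yields a triangular recursion: the order-$n^{3/2}$ relation fixes $d_0$, and for $j\geq1$ the order-$n^{(3-j)/2}$ relation determines $d_j(t)$ linearly in terms of $d_0(t),\dots,d_{j-1}(t)$ and their $t$-derivatives, the derivative terms first entering only at order $n^{1/2}$ (i.e. from $j=2$ onward). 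Solving this recursion term by term produces $d_1,\dots,d_5$ as listed; in particular the order-$n$ relation gives $d_1=\frac{2t}{3}$ with no derivative contribution, and one finds $d_3=0$. The $B<0$ expansion follows from the identical procedure on the other branch, and the recursion reproduces the stated coefficients $\widetilde{d}_j$, which satisfy the consistency relation $\widetilde{d}_j=(-1)^{j+1}d_j$ and thereby confirm (\ref{5.2}) and (\ref{5.4}).

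The routine part is the bookkeeping of collecting like powers of $n^{-1/2}$; the genuinely delicate points are two. First, one must justify that the expansion proceeds purely in half-integer powers with no logarithmic or secular terms: this is guaranteed because at each order $j\geq1$ the unknown $d_j(t)$ appears linearly with coefficient $\frac{9}{8}d_0^2-1=2$, the value at $d_0$ of the linearization of the leading balance $\frac{3}{8}d^3-d$, so the recursion is uniquely solvable at every step. Second, and more serious, the construction only produces a \emph{formal} asymptotic series; to assert that it is genuinely the large-$n$ expansion of $R_n(t)$ one must know \emph{a priori} that $R_n(t)$ admits an asymptotic expansion of this type and on which branch it lies. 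I expect this to be the main obstacle, to be handled by matching the formal solution of (\ref{4.12}) to the explicit leading behavior read off from the quartic (\ref{4.13}), with the correct branch pinned down by the sign of $B$ through the Riccati system (\ref{4.6})--(\ref{4.7}).
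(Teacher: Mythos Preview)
Your proposal is correct and follows essentially the same route as the paper: disregard the derivative terms in (\ref{4.12}) to obtain the quartic (\ref{4.13}), read off the leading $n^{1/2}$ behavior and the branch choice (tied to the sign of $B$), then substitute the ansatz (\ref{5.1})/(\ref{5.3}) into the full equation (\ref{4.12}) and solve for the coefficients order by order. Your additional remarks on the nonvanishing linearization coefficient $\tfrac{9}{8}d_0^2-1=2$ and on the merely formal nature of the resulting series are in fact more careful than the paper, which simply posits the ansatz and computes the $d_j$ without addressing either point.
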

By direct computations, we arrive at the following expansion formula for the scaled Hankel determinant
\begin{theorem}
For fixed $s>0$, $\gamma>-1$, we obtain
\begin{itemize}
  \item if $B>0$, as $n\rightarrow\infty$,
\begin{align}\label{5.5}
\ln\frac{D_n(s)}{D_n(0)}=&-\frac{2\sqrt{6}}{9}sn^{\frac{3}{2}}+\frac{s^2n}{12}-\frac{\sqrt{6}(s^3+36\gamma s)n^{\frac{1}{2}}}{216}+\frac{s^4}{864}+\frac{s^2\gamma}{24}\nonumber\\
&+\mathcal{O}(n^{-\frac{1}{2}})
\end{align}
  \item if $B<0$, as $n\rightarrow\infty$,
\begin{align}\label{5.6}
\ln\frac{D_n(s)}{D_n(0)}=&\frac{2\sqrt{6}}{9}sn^{\frac{3}{2}}+\frac{s^2n}{12}+\frac{\sqrt{6}(s^3+36\gamma s)n^{\frac{1}{2}}}{216}+\frac{s^4}{864}+\frac{s^2\gamma}{24}\nonumber\\
&+\mathcal{O}(n^{-\frac{1}{2}}).
\end{align}
\end{itemize}
\end{theorem}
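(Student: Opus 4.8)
The plan is to integrate the continuous $\sigma$-form data already assembled rather than to attack the determinant directly. Since $\sigma_n(t)=\frac{d}{dt}\ln D_n(t)$ by definition, one has
\begin{equation*}
\ln\frac{D_n(s)}{D_n(0)}=\int_0^s\sigma_n(t)\,dt,
\end{equation*}
so the whole problem reduces to producing a large-$n$ expansion of $\sigma_n(t)$ that is uniform in $t$ on $[0,s]$ and integrating it term by term. For the integrand I would use the closed form (\ref{3.25}) from the Proposition, which already expresses $\sigma_n(t)$ purely through $R_n(t)$ and $R_n^{\prime}(t)$; this sidesteps the need to handle the sum $\sum_{j=0}^{n-1}R_j(t)$ in (\ref{3.10}) directly.

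The key simplification to record first is that the two reciprocal terms $-\frac{(R_n^{\prime}(t))^2}{4R_n(t)}$ and $\frac{\gamma^2}{4R_n(t)}$ in (\ref{3.25}) are negligible to the order needed. Indeed (\ref{5.2})/(\ref{5.4}) give $R_n(t)=\mathcal{O}(n^{1/2})$ and $R_n^{\prime}(t)=\mathcal{O}(1)$, so both terms are $\mathcal{O}(n^{-1/2})$ and feed only the error. Hence, up to $\mathcal{O}(n^{-1/2})$,
\begin{equation*}
\sigma_n(t)=\frac{R_n^3(t)}{16}-\frac{t}{8}R_n^2(t)+\Big(\frac{t^2}{16}-\frac{n}{2}-\frac{\gamma}{4}\Big)R_n(t)+\frac{(2n+\gamma)t}{4}+\mathcal{O}(n^{-1/2}),
\end{equation*}
and it suffices to substitute the expansion (\ref{5.2}) (case $B>0$) or (\ref{5.4}) (case $B<0$) into these polynomial terms. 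Writing $R_n(t)=d_0 n^{1/2}+d_1+d_2 n^{-1/2}+\cdots$ with the coefficients listed before the theorem, I would expand $R_n^2$ and $R_n^3$ to the requisite order and collect the coefficients of $n^{3/2}$, $n$, $n^{1/2}$ and $n^0$, each of which is a polynomial in $t$. A short check confirms the mechanism: at order $n^{3/2}$ the contributions $d_0^3/16$ and $-\tfrac12 d_0$ combine to the $t$-independent constant $-\frac{2\sqrt6}{9}$, while at order $n$ the pieces from $\frac{R_n^3}{16}$, $-\frac t8 R_n^2$, $-\frac n2 R_n$ and $\frac{nt}{2}$ add up to $\frac t6$, whose integral $\frac{s^2}{12}n$ is exactly the second term of (\ref{5.5}).

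Carrying this out to order $n^0$ and integrating each coefficient from $0$ to $s$ then yields (\ref{5.5}). The case $B<0$ is identical, now feeding in (\ref{5.4}); since its coefficients satisfy $\widetilde d_j=(-1)^{j+1}d_j$, the odd powers of $n^{1/2}$ flip sign, which reproduces precisely the sign pattern distinguishing (\ref{5.6}) from (\ref{5.5}). The only genuine labour, and the main obstacle, is the bookkeeping at orders $n^{1/2}$ and $n^0$: one must expand the cube $R_n^3$ with care (including the cross terms $6d_0d_1d_2+d_1^3$ that populate the $n^0$ level), track the full $\gamma$-dependence, and verify that the advertised cancellations leaving a clean polynomial in $t$ actually occur. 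A minor secondary point is to justify interchanging the $t$-integration with the asymptotic expansion, which is immediate here since all coefficients are smooth in $t$ and the expansion is uniform on the compact interval $[0,s]$.
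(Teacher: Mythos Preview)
Your proposal is correct and follows essentially the same route as the paper: substitute the asymptotic expansion (\ref{5.2}) (respectively (\ref{5.4})) of $R_n(t)$ into the closed formula (\ref{3.25}) for $\sigma_n(t)=\frac{d}{dt}\ln D_n(t)$, read off the expansion of $\sigma_n(t)$ to order $n^0$, and integrate in $t$ from $0$ to $s$. Your write-up is in fact more explicit than the paper's, which simply states ``Putting (\ref{5.2}) into (\ref{3.25})'' and records the resulting expansion (\ref{5.7}) before integrating; your observation that the two reciprocal terms are $\mathcal{O}(n^{-1/2})$ and your sample checks at orders $n^{3/2}$ and $n$ are exactly the bookkeeping implicit in that step.
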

\begin{proof}
Putting (\ref{5.2}) into (\ref{3.25}), the logarithmic derivative of the Hankel determinant $D_n(t)$ for $B>0$ becomes,
\begin{align}\label{5.7}
\frac{d}{dt}\ln D_n(t)=&-\frac{2\sqrt{6}}{9}n^{\frac{3}{2}}+\frac{nt}{6}-\frac{\sqrt{6}(t^2+12\gamma)n^{\frac{1}{2}}}{72}+\frac{t^3}{216}+\frac{t\gamma}{12}\nonumber\\
&+\mathcal{O}(n^{-\frac{1}{2}}).
\end{align}
Then, integration on both sides of equation (\ref{5.7}) from 0 to $s$,
\begin{equation*}
\int_0^{s}\frac{d}{dt}\ln D_n(t)dt=\ln D_n(s)-\ln D_n(0),
\end{equation*}
we obtain (\ref{5.5}). The proof is similar for $B<0$.
\end{proof}

Once the asymptotic behaviors for $R_n(t)$ is obtained, we also consider the large $n$ behavior of the monic orthogonal polynomials $P_n(z)$.
\begin{theorem}
As $n\rightarrow\infty$ and $t\rightarrow \infty$, $P_n(z)$ satisfies the second order differential-difference equation ,
\begin{equation}\label{5.8}
P_n^{\prime\prime}(z)+\left(\frac{3}{z}-2z+t\right)P_n^{\prime}(z)+\bigg(\left(\frac{2t^2}{3}+\frac{\gamma}{12}\right)\frac{n^2\gamma}{z^6}+\frac{2n^2\gamma t}{3z^5}+\frac{2n^2\gamma}{3z^4}\bigg)P_n(z)=0.
\end{equation}
\end{theorem}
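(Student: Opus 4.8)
The plan is to specialize the exact second-order differential equation (\ref{2.13}), with coefficients $Q_n(z)$ and $T_n(z)$ given by (\ref{2.14}) and (\ref{2.15}), to the present weight and then feed in all the large-$n$ information from the earlier sections. Since $w_0(z)=e^{-z^2+tz}$ we have $v_0(z)=z^2-tz$ and $v_0'(z)=2z-t$, so that $-v_0'(z)=-2z+t$ already supplies the polynomial part of the coefficient of $P_n'(z)$ in (\ref{5.8}); what then remains for that coefficient is to show that $-A_n'(z)/A_n(z)\to 3/z$ in the stated limit. Everything else must come from the large-$z$ expansions (\ref{3.1})--(\ref{3.2}) of $A_n(z)$ and $B_n(z)$, combined with the identities (\ref{3.3})--(\ref{3.9}).

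Before substituting, I would assemble the auxiliary asymptotics forced by (\ref{5.2}). Equation (\ref{3.3}) gives $\alpha_n=\frac12(t+R_n)$; solving (\ref{4.7}) for $r_n(t)$ and inserting (\ref{5.2}) gives the leading large-$n$ form of $r_n(t)$, and then (\ref{3.8}) delivers that of $\beta_n$. These are the quantities that generate the explicit $n$- and $t$-dependence of the coefficients in (\ref{5.8}), and the bilinear relations (\ref{3.8}) and (\ref{3.9}), which tie $\beta_nR_{n-1}R_n$ and $r_n^2$ together, are what ultimately control which powers of $n$ survive.

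The computational core is then to substitute (\ref{3.1}) and (\ref{3.2}) into (\ref{2.14}) and (\ref{2.15}), use (\ref{3.3})--(\ref{3.9}) to re-express every coefficient purely through $R_n$, $r_n$ and $\beta_n$, and finally insert the asymptotics above and pass to the double limit $n,t\to\infty$, keeping the leading contribution at each order in $1/z$. For $T_n(z)$ I expect strong cancellation among the dominant pieces: by the supplementary condition $(S_2^{\prime})$ one has $\beta_nA_nA_{n-1}-B_n^2-v_0'B_n=\sum_{k=0}^{n-1}A_k(z)$, so the $O(n^2)$ parts of $\beta_nA_nA_{n-1}$ and of $B_n^2$ must annihilate one another, and it is only the $\gamma$-dependent remainder — the part carried by the $\gamma$-terms in (\ref{3.1})--(\ref{3.2}), which sits at the high negative powers $z^{-4},z^{-5},z^{-6}$ — that should survive and reproduce the coefficient in (\ref{5.8}).

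The main obstacle is exactly this cancellation bookkeeping. Because $R_n$, $r_n$ and $\beta_n$ all diverge with $n$, the expansions (\ref{3.1})--(\ref{3.2}) must be carried far enough (through the $\gamma\alpha_n$ and $\gamma\beta_n$ coefficients) and the asymptotic series (\ref{5.2}) kept to sufficiently many orders that the leading terms cancel \emph{exactly} and the correct subleading, $\gamma$-proportional coefficients $\big(\frac{2t^2}{3}+\frac{\gamma}{12}\big)\frac{n^2\gamma}{z^6}$, $\frac{2n^2\gamma t}{3z^5}$, $\frac{2n^2\gamma}{3z^4}$ emerge. Tracking which order in $1/z$ each product $\beta_nA_nA_{n-1}$, $B_n^2$, $B_nv_0'$ and $B_n'$ feeds into, and not dropping a relevant order in the double limit, is the delicate part; once the dominant pieces are shown to cancel, the surviving algebra is routine.
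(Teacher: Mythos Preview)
Your proposal is correct and follows essentially the same route as the paper: substitute the expansions (\ref{3.1})--(\ref{3.2}) into (\ref{2.14})--(\ref{2.15}), use the relations (\ref{3.3}), (\ref{3.8}) and (\ref{4.7}) to rewrite everything in terms of $R_n(t)$, and then pass to the double limit $n,t\to\infty$ using the asymptotics (\ref{5.2}) (or (\ref{5.4})). Your use of $(S_2')$ to anticipate the cancellation in $T_n(z)$ is a helpful organizing remark that goes slightly beyond the paper's terse proof, but the underlying strategy is the same.
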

\begin{proof}
Substituting (\ref{3.1}) and (\ref{3.2}) into (\ref{2.14}) and (\ref{2.15}), $Q_n(z)$ and $T_n(z)$ can be expressed in terms of $r_n(t)$, $R_n(t)$ and $\beta_n$. It should be pointed out that relationship among $R_n(t)$, $r_n(t)$, $\alpha_n$ and $\beta_n$ are shown by (\ref{3.3}), (\ref{3.8}) and (\ref{4.7}). The coefficients of (\ref{2.13}) are given in terms of $R_n(t)$. Sending $n\rightarrow\infty$, $t\rightarrow\infty$, and combining the asymptotic value (\ref{5.2}) or (\ref{5.4}), we obtain (\ref{5.8}).
\end{proof}
\begin{remark}
When $\gamma=0$, (\ref{5.8}) satisfies the bi-confluent Heun equation. More details of the Heun equation, see \cite{CFZ2019,H1888,R1995,SL2000,DZY,SK2010}.
\end{remark}

\section{Acknowledgements}
\noindent D. Wang, M. Zhu and Y. Chen would like to give thanks the Science and Technology Development Fund of the Macau SAR for providing FDCT 023/ 2017/A1. They would also like to thank the University of Macau for MYRG 2018-00125 FST.

\section{Availability of data}
\noindent The data that support the findings of this study are available from the corresponding author upon reasonable request.

\end{document}